\documentclass[11pt]{article}

\usepackage[margin=1in]{geometry}
\usepackage[T1]{fontenc}
\IfFileExists{lmodern.sty}{\usepackage{lmodern}}{}
\usepackage{amsmath,amssymb,amsthm,mathtools}
\usepackage{graphicx}
\usepackage{bm}
\usepackage{booktabs}
\usepackage{array}
\usepackage{enumitem}
\usepackage{aliascnt}
\usepackage[numbers,sort&compress]{natbib}
\usepackage[colorlinks=true,allcolors=blue]{hyperref}
\usepackage[nameinlink,capitalise,noabbrev]{cleveref}
\usepackage{float}
\usepackage{graphicx,tikz}

\allowdisplaybreaks
\setlist[itemize]{leftmargin=1.8em,itemsep=0.2em,topsep=0.3em}
\setlist[enumerate]{leftmargin=2.0em,itemsep=0.2em,topsep=0.3em}

\newtheorem{theorem}{Theorem}[section]
\newaliascnt{proposition}{theorem}
\newtheorem{proposition}[proposition]{Proposition}
\aliascntresetthe{proposition}
\newaliascnt{lemma}{theorem}
\newtheorem{lemma}[lemma]{Lemma}
\aliascntresetthe{lemma}
\newaliascnt{corollary}{theorem}
\newtheorem{corollary}[corollary]{Corollary}
\aliascntresetthe{corollary}
\newaliascnt{fact}{theorem}

\aliascntresetthe{fact}
\theoremstyle{definition}
\newaliascnt{definition}{theorem}
\newtheorem{definition}[definition]{Definition}
\aliascntresetthe{definition}
\newaliascnt{problem}{theorem}

\aliascntresetthe{problem}
\theoremstyle{remark}
\newaliascnt{remark}{theorem}
\newtheorem{remark}[remark]{Remark}
\aliascntresetthe{remark}

\newcommand{\C}{\mathbb{C}}
\newcommand{\R}{\mathbb{R}}

\newcommand{\Id}{\mathrm{id}}
\newcommand{\Tr}{\operatorname{Tr}}

\newcommand{\poly}{\operatorname{poly}}
\newcommand{\Herm}{\operatorname{Herm}}
\newcommand{\ip}[2]{\left\langle #1,#2\right\rangle}
\newcommand{\norm}[1]{\left\lVert #1\right\rVert}
\newcommand{\abs}[1]{\left\lvert #1\right\rvert}
\newcommand{\ket}[1]{\lvert #1\rangle}
\newcommand{\bra}[1]{\langle #1\rvert}
\newcommand{\proj}[1]{\lvert #1\rangle\!\langle #1\rvert}
\newcommand{\Psucc}{P_{\mathrm{succ}}}
\newcommand{\etatr}{\eta_{\mathrm{tr}}}
\newcommand{\sphere}{\mathbb{S}}
\newcommand{\E}{\mathbb{E}}
\newcommand{\cI}{\mathcal{I}}
\newcommand{\cK}{\mathcal{K}}
\newcommand{\cX}{\mathcal{X}}
\newcommand{\cH}{\mathcal{H}}
\newcommand{\cD}{\mathcal{D}}
\newcommand{\cL}{\mathcal{L}}
\newcommand{\cY}{\mathcal{Y}}
\newcommand{\bdeg}{\operatorname{bdeg}}

\newcommand{\lmin}{\lambda_{\min}}
\newcommand{\lmax}{\lambda_{\max}}
\newcommand{\Pharm}{\mathcal{P}^{\mathrm{harm}}}

\title{A Sum-of-Squares Hierarchy with Quadratic Convergence for Quantum Channel Coding}

\author{
  Hoang Ta\\
  {\small Hanoi University of Science and Technology}\\
  {\small Vietnam}
  \and
  Hoang Anh Tran\\
  {\small National University of Singapore}\\
  {\small Singapore}
}

\date{}

\begin{document}
\maketitle

\begin{abstract}
Computing the optimal success probability for transmitting classical messages through a single use of a quantum channel is NP-hard, even for two messages. An existing semidefinite programming hierarchy based on symmetric extensions provides convergent upper bounds with an a priori error estimate that decays as the inverse square root of the extension level. In this work, we construct a Hermitian sum-of-squares hierarchy for an arbitrary number of messages and prove quadratic convergence in its level. The error bound is proportional to the advantage over random guessing. Our approach combines state-discrimination duality with positive polynomial kernels on products of spheres to construct feasible polynomial dual certificates. For binary messages, the resulting bounds give a multiplicative approximation from above of the trace-norm contraction coefficient.
\end{abstract}

%%%%%%%%%%%%%%%%%
\section{Introduction}
\label{sec:intro}

One of the main tasks in information theory is to determine how reliably a message can be transmitted through a noisy channel. For classical channels, Shannon's channel coding theorem characterizes the largest communication rate achievable with vanishing error in the asymptotic limit of many independent channel uses \cite{Shannon1948}. In the one-shot setting, the channel is used only once, and the objective is to maximize the probability of correctly transmitting one of a prescribed number of messages over all admissible encoders and decoders.

From an algorithmic perspective, the one-shot coding problem takes an explicit description of a classical channel and a prescribed number of messages as input, and asks for an encoder and decoder that maximize the success probability. This optimization can be formulated as submodular maximization, leading to a polynomial-time greedy algorithm with approximation ratio $1-e^{-1}$ \cite{BarmanFawzi2018}. However, achieving a strictly larger constant approximation ratio is NP-hard when the number of messages is part of the input.

In this work, we consider the transmission of classical information through a quantum channel. Let $A$ and $B$ be finite-dimensional complex Hilbert spaces, with $d_A=\dim A$ and $d_B=\dim B$, and write $\cL(A)$ for the space of linear operators on $A$. A quantum channel $\Phi:\cL(A)\to\cL(B)$ is a completely positive, trace-preserving linear map. For an integer $k\ge2$, let $\Psucc(\Phi,k)$ denote the largest average probability of correctly transmitting one of $k$ equiprobable classical messages through a single use of $\Phi$. The sender encodes the messages into quantum states, and the receiver applies a positive operator-valued measure (POVM) to distinguish their channel outputs. Both the input states and the decoding measurement are optimized, without shared entanglement.

Since classical channels are special cases of quantum channels, this problem inherits the classical hardness results. In the quantum setting, however, computing the optimal success probability remains NP-hard even for two messages~\cite{DelsolEtAl2025}, whereas classical binary coding is efficiently solvable. More precisely,
\[
 \Psucc(\Phi,2)=\frac{1+\etatr(\Phi)}2,
\]
where $\etatr(\Phi)$ is the trace-norm contraction coefficient, defined as the largest ratio of output trace distance to input trace distance over distinct input states, with the convention $\etatr(\Phi)=0$ when $d_A=1$. Approximating this coefficient within a constant factor and deciding whether it equals one are NP-hard \cite{DelsolEtAl2025}. Binary coding therefore connects the computational problem directly with the loss of distinguishability under a quantum channel.

Several approaches provide bounds and approximation guarantees in quantum communication settings. Approximation algorithms have been obtained for classical--quantum channels \cite{FawziSeifSzilagyi2019} and entanglement-assisted coding \cite{OufkirBerta2025}, the latter involving a different feasible set from the unassisted problem considered here. Hypothesis testing and non-signaling or PPT-preserving relaxations yield converse bounds for classical communication over quantum channels \cite{WangRenner2012,MatthewsWehner2014,WangXieDuan2018}. On the contraction side, connections with channel orders and capacity approximation are studied in \cite{HircheRouzeFranca2022}, while quantum Doeblin coefficients provide efficiently computable bounds \cite{Hirche2024,GeorgeEtAl2026}.

For the optimal success probability $\Psucc(\Phi,k)$, a convergent semidefinite programming (SDP) hierarchy is constructed in \cite{DelsolEtAl2025} using the constrained bilinear framework of \cite{BertaEtAl2021}. For every integer $m\ge1$, its level-$m$ value satisfies
\[
 0\le\mathrm{SDP}_m(\Phi,k)-\Psucc(\Phi,k)\le\frac{\poly(d)}{\sqrt m},\qquad d=\max\{d_A,d_B\},
\]
where $\poly(d)$ denotes a polynomial prefactor depending on the channel dimensions. The hierarchy provides upper bounds that can be strengthened by positive partial transpose (PPT) constraints. Its direct formulation involves matrices acting on an input system and $m$ output copies, making the level required for a prescribed accuracy an important factor in the computational cost. This motivates the construction of hierarchies with stronger quantitative convergence guarantees.

\paragraph{Our approach and main results.}
We construct a Hermitian sum-of-squares (SOS) hierarchy for $\Psucc(\Phi,k)$. The construction combines a polynomial parametrization of pure input states with state-discrimination duality, and replaces the resulting pointwise positivity constraints by finite-degree SOS certificates. For every integer $L\ge1$, the level-$L$ value $U_L(\Phi,k)$ is the optimal value of a finite-dimensional SDP. These values form a nonincreasing sequence of upper bounds on $\Psucc(\Phi,k)$.

Our main result establishes quadratic convergence with an error proportional to the advantage over random guessing. There are absolute constant $c_1>0$ such that, for every $L\ge 1$,
\begin{equation}
 0\le U_L(\Phi,k)-\Psucc(\Phi,k)\le4c_1\frac{d_A^2}{L^2}\left(\Psucc(\Phi,k)-\frac1k\right).
\label{eq:intro-main-rate}
\end{equation}
Consequently, the choice $L=O(d_A/\sqrt{\varepsilon})$ guarantees an additive relaxation error of at most $\varepsilon\in(0,1)$. This sufficient level is independent of the output dimension $d_B$, although the SDP size depends on it. The dependence on $\Psucc(\Phi,k)-1/k$ also gives a stronger guarantee for channels whose coding performance is close to random guessing.

For binary messages, this relative error bound becomes a multiplicative approximation of the trace-norm contraction coefficient. Defining $E_L(\Phi):=2U_L(\Phi,2)-1$, we obtain
\begin{equation}
 \etatr(\Phi)\le E_L(\Phi)\le\left(1+4c_1\frac{d_A^2}{L^2}\right)\etatr(\Phi)\qquad(L\ge 1).
\label{eq:intro-binary-rate}
\end{equation}
Moreover, every level $L\ge1$ is exact when $\etatr(\Phi)\in\{0,1\}$. We compare the SDP sizes of the SOS and extension hierarchies at a common guaranteed accuracy. Numerical experiments on $40$ sampled qubit-to-qutrit channels further indicate that the first SOS level is numerically tight throughout the sample and improves on the first extension level, both with and without PPT constraints.

\subsection{Technical overview}
\label{sec:intro-technical}

\emph{From channel coding to polynomial certificates.}
For a fixed decoding POVM, each input state can be chosen pure without decreasing the success probability. Parametrizing these pure states by real unit vectors gives an encoder parameter
$x=(x_1,\ldots,x_k)\in\cX=(\sphere^{2d_A-1})^k$.
Let $\sigma_i(x)$ denote the channel output associated with the input represented by $x_i$. State-discrimination duality eliminates the decoding measurement and gives
\[
 s(x)=\min\left\{\Tr Y:\ Y\in\Herm(B),\
 Y\succeq\frac{\sigma_i(x)}k\text{ for all }i\in[k]\right\},
 \qquad
 P:=\Psucc(\Phi,k)=\max_{x\in\cX}s(x).
\]
Thus an upper bound $\gamma$ is certified by a Hermitian matrix field $Y(x)$ satisfying
$Y(x)\succeq\sigma_i(x)/k$ and $\Tr Y(x)\le\gamma$ throughout $\cX$.
At level $L$, we restrict $Y$ to ambient Hermitian matrix polynomials of degree at most $2L$ in each input block and impose matrix and scalar SOS certificates modulo the sphere equations, with factors of degree at most $L$ in each block. Their finite Gram representations give the SDP defining $U_L(\Phi,k)$.

\emph{Turning positive fields into SOS certificates.}
The main obstacle in proving convergence is that a pointwise optimal dual field need not be polynomial. For example, in binary discrimination, a natural optimal solution involves the matrix absolute value
$\abs{\sigma_1(x)-\sigma_2(x)}$.
We choose a bounded Borel measurable optimal field $Y_*$ with
$\Tr Y_*(x)=s(x)$ and apply a product squared-polynomial kernel $\cK_q$ directly to the positive matrix slacks $Y_*-\sigma_i/k$ and the nonnegative scalar slack $P-s$.
If the real polynomial $q$ has degree at most $L$, the kernel maps these fields to ambient polynomials of degree at most $2L$ in each block. Integrating their Gram coefficients yields finite positive semidefinite Gram matrices, so the resulting polynomials admit exact SOS representations. This construction requires no polynomial approximation or continuity of $Y_*$. However, the kernel also changes the output fields $\sigma_i$, and its effect on the constraints must be corrected.

\emph{Correcting the constraints through their common harmonic structure.}
All output fields have the same spherical mean
$\tau=\Phi(I_A/d_A)$ and satisfy
$\sigma_i(x)=\tau+h(x_i)$, where $h$ is a Hermitian matrix-valued spherical harmonic of degree two.
After normalization, $\cK_q$ fixes constant fields on $\cX$ and multiplies each $h(x_i)$ by the same eigenvalue $\lambda=\lambda_2(q)$.
For $\lambda>0$, define
\[
 \hat{Y}_L
 =\lambda^{-1}\cK_qY_*-\frac{1-\lambda}{\lambda k}\tau,
 \qquad
 \hat{\gamma}_L
 =\frac{P}{\lambda}-\frac{1-\lambda}{\lambda k}.
\]
On $\cX$, the corrected matrix and scalar slacks equal
$\lambda^{-1}\cK_q(Y_*-\sigma_i/k)$ and
$\lambda^{-1}\cK_q(P-s)$, respectively.
Their SOS representations therefore establish feasibility at level $L$, with objective
\[
 \hat{\gamma}_L
 =P+(\lambda^{-1}-1)\left(P-\frac1k\right).
\]
The relative factor $P-1/k$ comes from the common baseline $\tau/k$ in the affine correction. Choosing $q$ to attain $\lambda_{2d_A,L}$ and applying \cref{prop:kernel-constant} gives $\lambda^{-1}-1\le4c_1d_A^2/L^2$ for every $L\ge1$, yielding \eqref{eq:intro-main-rate}. \Cref{fig:intro-overview} summarizes the construction and the convergence argument. 
% Choosing $q$ using the spherical kernel estimates of \cite{FangFawzi2021} gives
% $\lambda^{-1}-1\le4c_1d_A^2/L^2$ for $L\ge 1$, yielding \eqref{eq:intro-main-rate}.

\begin{figure}[t]
\centering
\resizebox{\linewidth}{!}{%
\begin{tikzpicture}[
  font=\small,
  sosbox/.style={
    draw=black!65,
    rounded corners=2pt,
    text width=3.85cm,
    minimum height=1.35cm,
    align=center,
    inner sep=5pt,
    outer sep=0pt
  },
  sosmethod/.style={sosbox,fill=blue!5},
  sosproof/.style={sosbox,fill=green!5},
  sosflow/.style={->,thick,draw=black!70},
  soslink/.style={->,thick,dashed,draw=black!65}
]
\node[sosmethod] (coding) at (0,0)
  {Channel coding\\$\Psucc(\Phi,k)$};
\node[sosmethod] (dual) at (5,0)
  {Pure inputs and duality\\$P=\max_{x\in\cX}s(x)$};
\node[sosmethod] (sdp) at (10,0)
  {SOS relaxation\\Finite SDP: $U_L$};

\node[sosproof] (slacks) at (0,-3.2)
  {Positive dual slacks\\$Y_*-\sigma_i/k$, $P-s$};
\node[sosproof] (kernel) at (5,-3.2)
  {Kernel smoothing\\$\cK_q:\ \mathrm{PSD}\longrightarrow\mathrm{SOS}$};
\node[sosproof] (correction) at (10,-3.2)
  {Affine correction\\$(\hat{\gamma}_L,\hat{Y}_L)$};

\draw[sosflow] (coding.east) -- (dual.west);
\draw[sosflow] (dual.east) -- (sdp.west);
\draw[sosflow] (slacks.east) -- (kernel.west);
\draw[sosflow] (kernel.east) -- (correction.west);

\draw[soslink]
  (dual.south) -- (5,-1.6) -- (0,-1.6) -- (slacks.north);
\node[font=\footnotesize,fill=white,inner sep=3pt]
  at (2.5,-1.6) {pointwise optimal dual field};

\draw[soslink] (correction.north) -- (sdp.south);
\node[font=\footnotesize,align=center,fill=white,inner sep=3pt]
  at (10,-1.6) {feasible\\certificate};
\end{tikzpicture}%
}
\caption{Construction of the SOS hierarchy (upper row) and feasible certificates for its convergence analysis (lower row). The kernel is used in the proof; solving the finite SDP does not require knowing the optimal field $Y_*$ or the value $P$.}
\label{fig:intro-overview}
\end{figure}

\subsection{Related work}
\label{sec:intro-related}

\emph{Channel coding and contraction coefficients.}
 Related algorithmic work considers broadcast channels \cite{fawzi2024broadcast}, multiple-access channels with non-signaling assistance \cite{FawziFerme2024MAC}, and classical--quantum channels \cite{FawziSeifSzilagyi2019}. For classical channels, non-signaling codes also yield linear programming converse bounds \cite{Matthews2012}. For classical communication over quantum channels, converse bounds have been developed through hypothesis testing \cite{WangRenner2012,MatthewsWehner2014}, non-signaling and PPT-preserving relaxations \cite{WangXieDuan2018}, meta-converses \cite{WangFangTomamichel2019}, and extendible measurements under a maximum-error criterion \cite{SinghNuradhaWilde2025}. Entanglement-assisted coding admits approximation algorithms \cite{OufkirBerta2025} and position-based achievability bounds \cite{QiWangWilde2018}, with shared entanglement enlarging the set of admissible encoding and decoding strategies. The binary case of our problem connects channel coding with trace-norm contraction, whose structural properties were studied in \cite{Ruskai1994}. Subsequent work relates contraction coefficients to channel orders and capacity approximation \cite{HircheRouzeFranca2022}, while quantum Doeblin coefficients provide efficiently computable bounds \cite{Hirche2024,GeorgeEtAl2026}. Other developments concern quantum local differential privacy \cite{NuradhaWilde2025}, nonlinear strong data-processing inequalities for quantum hockey-stick divergences \cite{NuradhaGeorgeHirche2025}, and conditional contraction coefficients with quantum reference systems \cite{HircheEtAl2026}.

\emph{Semidefinite hierarchies in quantum information.}
Moment relaxations and SOS certificates provide a general framework for polynomial optimization \cite{Lasserre2001,Parrilo2003,Laurent2009}. In quantum information, symmetric extensions approximate separability \cite{DohertyParriloSpedalieri2004}, while noncommutative moment hierarchies address quantum correlations and bilinear optimization \cite{NavascuesPironioAcin2008,PironioNavascuesAcin2010,BertaFawziScholz2016}. SDP bounds for quantum communication and error correction also arise from PPT-preserving and non-signaling codes \cite{LeungMatthews2015} and extension-based methods \cite{BertaEtAl2021,KaurEtAl2021,HoldsworthSinghWilde2023}. Quantitative results include quadratic convergence for PPT Bose-symmetric extensions \cite{NavascuesOwariPlenio2009}. Exploiting symmetry reduces the size of SDPs for channel fidelity \cite{CheeTaVu2025} and regularized channel quantities \cite{FawziShayeghiTa2022}; see \cite{TavakoliEtAl2024} for a broader review. Of particular relevance to our analysis, squared-polynomial kernels yield quadratic convergence of SOS hierarchies on the sphere, with relative error bounds and extensions to matrix-valued polynomials \cite{FangFawzi2021}. The kernel approach also gives quadratic convergence on products of spheres, with an application to the order-two quantum Wasserstein distance \cite{Magron2026}. Finite convergence has been established for generic multihomogeneous objectives on these domains \cite{HalasehMagronSkomra2025}. Combining the kernel approach with state-discrimination duality, we construct SOS certificates for channel coding by applying kernels directly to bounded measurable positive dual slack fields, which need not be polynomial. We then correct the matrix constraints and the trace bound simultaneously using the common mean and degree-two harmonic structure of the channel outputs.

% squared-polynomial kernels yield quadratic convergence of SOS hierarchies on the sphere, including for matrix-valued polynomials \cite{FangFawzi2021}. The kernel approach also gives quadratic convergence on products of spheres, with an application to the order-two quantum Wasserstein distance \cite{Magron2026}. Finite convergence has been established for generic multihomogeneous objectives on these domains \cite{HalasehMagronSkomra2025}. Our analysis combines the spherical kernel estimates in \cite{FangFawzi2021} with state-discrimination duality. Applying the kernels directly to bounded measurable positive dual slack fields and correcting the constraints through the common mean and degree-two harmonic structure of the channel outputs yields an error bound proportional to the advantage over random guessing.

\paragraph{Organization.}
\Cref{sec:background} presents the notation, coding problem, SOS cones, and spherical kernels. \Cref{sec:hierarchy} constructs the hierarchy and its finite SDP representation. \Cref{sec:convergence-analysis} proves the quadratic convergence bounds and their consequences for trace-norm contraction. \Cref{sec:numerics} compares SDP sizes and reports numerical experiments. The appendices provide the Gram, quotient-ideal, regularity, and kernel estimates used in the analysis.

%%%%%%%%%%%%%
\section{Background}
\label{sec:background}

This section fixes notation and records the two general tools used below: sum-of-squares certificates on products of real spheres and positive squared-kernel operators.  Throughout, all Hilbert spaces are finite dimensional and all measures are Borel.

\subsection{Basic notation}
\label{sec:prelim-channels}

Throughout, $A$ and $B$ are finite-dimensional complex Hilbert spaces,
with $d_A=\dim A$ and $d_B=\dim B$. We identify $A$ with $\C^{d_A}$.
The complex vector spaces of linear operators on $A$ and linear maps
from $A$ to $B$ are denoted by $\cL(A)$ and $\cL(A,B)$, respectively.
We use $\Herm(A)$ for the real vector space of Hermitian operators on
$A$, and write $X\succeq Y$ when $X-Y$ is positive semidefinite.
The set of density operators on $A$ is $ \cD(A)=\{\rho\in\Herm(A):\rho\succeq0,\ \Tr\rho=1\}.$
% \[
%  \cD(A)=\{\rho\in\Herm(A):\rho\succeq0,\ \Tr\rho=1\}.
% \]
We denote the identity operator on $A$ by $I_A$ and the identity map on
$\cL(A)$ by $\Id_A$. For every positive integer $k$, we write $[k]:=\{1,\ldots,k\}$.

For a Hermitian operator $X$, we write $\lmin(X)$ and $\lmax(X)$ for its
smallest and largest eigenvalues, respectively. For an arbitrary
$X\in\cL(A)$, its absolute value is $\abs{X}=\sqrt{X^\dagger X}$.
We denote its operator norm by $\norm{X}_\infty$, which equals its
largest singular value, and its trace norm by
$\norm{X}_1=\Tr\abs{X}$.

A \emph{quantum channel} $\Phi:\cL(A)\to\cL(B)$ is a completely positive,
trace-preserving linear map. Every quantum channel can be expressed
in terms of Kraus operators $K_1,\ldots,K_r\in\cL(A,B)$ as
\begin{equation}
 \Phi(X)=\sum_{a=1}^r K_aXK_a^\dagger,
 \qquad
 \sum_{a=1}^r K_a^\dagger K_a=I_A.
 \label{eq:kraus}
\end{equation}
Fix an orthonormal basis $\{\ket{j}\}_{j=1}^{d_A}$ of $A$.
The \emph{normalized Choi matrix} of $\Phi$ is defined by
\begin{equation}
 J(\Phi)=(\Id_A\otimes\Phi)(\proj{\Omega}),
 \qquad
 \ket{\Omega}=\frac{1}{\sqrt{d_A}}
              \sum_{j=1}^{d_A}\ket{j}\ket{j};
 \label{eq:choi}
\end{equation}
More generally, Eq.~\eqref{eq:choi} defines the Choi matrix of any linear map $\Phi:\cL(A)\to\cL(B)$. Complete positivity of $\Phi$ guarantees that
$J(\Phi)\succeq0$; trace preservation is not required for this property.

\subsection{Problem statement}
\label{sec:problem-results}

Consider a classical channel described by transition probabilities
$W(y\mid x)$ between finite input and output alphabets. Given an integer
$k\ge2$, the optimal channel coding problem asks for encoding and decoding
rules that maximize the average probability of correctly recovering one of
$k$ equiprobable messages after a single channel use. As shown in \cite{BarmanFawzi2018}, achieving a constant
approximation ratio strictly larger than $1-e^{-1}$ is NP-hard when $k$ is part of the input.

We study the corresponding problem for a quantum channel
$\Phi:\cL(A)\to\cL(B)$. To transmit message $i\in[k]$, the sender prepares
a state $\rho_i\in\cD(A)$ and sends it through $\Phi$. The receiver applies
a positive operator-valued measure (POVM) $(M_1,\ldots,M_k)$ to the output
and interprets outcome $i$ as the decoded message. The optimal average
success probability is therefore
\begin{equation}
 \Psucc(\Phi,k)
 =\max_{\substack{\rho_1,\ldots,\rho_k\in\cD(A)\\
                   M_1,\ldots,M_k\succeq0,\ \sum_iM_i=I_B}}
 \frac1k\sum_{i=1}^k\Tr\!\left[M_i\Phi(\rho_i)\right].
 \label{eq:Psucc}
\end{equation}
Classical channels can be represented within this model without changing
their optimal success probabilities. Consequently, the classical
hardness result also establishes NP-hardness of computing
$\Psucc(\Phi,k)$ when $k$ is part of the input.

For every channel $\Phi$ and every $k\ge2$,
\begin{equation}
 \frac1k\le\Psucc(\Phi,k)\le1.
 \label{eq:operational-bounds}
\end{equation}
The lower bound is achieved by random guessing: choosing $M_i=I_B/k$
gives $\Tr[M_i\Phi(\rho_i)]=1/k$ for every input state $\rho_i$.
For the upper bound, each POVM element satisfies
$0\preceq M_i\preceq I_B$, so
$\Tr[M_i\Phi(\rho_i)]\le\Tr\Phi(\rho_i)=1$.
Averaging these inequalities gives $\Psucc(\Phi,k)\le1$.

For two messages, the optimal success probability has a useful
characterization in terms of the \emph{trace-norm contraction coefficient}
of $\Phi$, defined by
\begin{equation}
 \etatr(\Phi)
 =\sup_{\substack{\rho,\sigma\in\cD(A)\\ \rho\ne\sigma}}
 \frac{\norm{\Phi(\rho)-\Phi(\sigma)}_1}
      {\norm{\rho-\sigma}_1}
 \in[0,1].
 \label{eq:etatr-def}
\end{equation}
When $d_A=1$, there is only one input state, and we set
$\etatr(\Phi)=0$. For $d_A\ge2$, the supremum in
\eqref{eq:etatr-def} is attained at a pair of orthogonal pure states
~\cite{DelsolEtAl2025}. Combining this characterization with the
Holevo--Helstrom theorem gives
\begin{equation}
 \Psucc(\Phi,2)=\frac{1+\etatr(\Phi)}2;
 \label{eq:P-eta}
\end{equation}
see~\cite{DelsolEtAl2025}. This identity also holds when $d_A=1$,
since the output is independent of the message and both sides equal
$\tfrac12$.

Although the classical problem can be solved efficiently for $k=2$,
the quantum problem remains NP-hard even in this case.
Indeed, \cite{DelsolEtAl2025} established NP-hardness of constant-factor
approximation of $\etatr(\Phi)$ and of deciding whether
$\etatr(\Phi)=1$. Through \eqref{eq:P-eta}, these results imply
NP-hardness of computing the optimal success probability for transmitting
a single bit, including deciding whether perfect transmission is possible.

% A previous approach to obtaining computable upper bounds is the
% semidefinite programming hierarchy of \citet{DelsolEtAl2025}, based on
% the framework for constrained bilinear optimization developed by
% \citet{BertaEtAl2021}. For binary messages, its level-$m$ value
% $\mathrm{SDP}_m(\Phi,k)$ satisfies
% \[
%  0\le
%  \mathrm{SDP}_m(\Phi,k)-\Psucc(\Phi,k)
%  \le\frac{\poly(d)}{\sqrt m},
%  \qquad d=\max\{d_A,d_B\}.
% \]
% Thus the hierarchy provides convergent upper bounds with an explicit
% a priori error guarantee. We compare its level and SDP-size scalings
% with those of our SOS hierarchy in \cref{sec:complexity}.

Using the framework for constrained bilinear optimization of
\cite{BertaEtAl2021}, \cite{DelsolEtAl2025} constructed a convergent
hierarchy of semidefinite programming upper bounds on
$\Psucc(\Phi,k)$ for an arbitrary number of messages $k\ge2$.
Its level-$m$ optimal value, denoted by $\mathrm{SDP}_m(\Phi,k)$,
satisfies
\[
 0\le
 \mathrm{SDP}_m(\Phi,k)-\Psucc(\Phi,k)
 \le\frac{\poly(d)}{\sqrt m},
 \qquad d=\max\{d_A,d_B\}.
\]
This gives an explicit a priori guarantee on the approximation error
at each level. We compare the level and SDP-size scalings of this
construction with those of our SOS hierarchy in \cref{sec:complexity}.

\subsection{Sum-of-squares certificates on products of spheres}
\label{sec:prelim-sos}

We recall the definitions of scalar and Hermitian matrix sum-of-squares certificates on products of spheres and fix the notation used below. These certificates are expressed modulo the sphere equations and will be used to replace pointwise positivity conditions by semidefinite constraints in the hierarchy. For further background on sum-of-squares certificates and polynomial optimization, including matrix-valued polynomials on spheres, we refer
to \cite{Laurent2009,FangFawzi2021}.

Fix integers $D\ge2$ and $k\ge1$. Let
$\sphere^{D-1}=\{v\in\R^D:\norm{v}_2=1\}$ and define
\begin{equation}
 \cX=(\sphere^{D-1})^k,
 \qquad
 x=(x_1,\ldots,x_k),\quad x_i\in\R^D.
 \label{eq:product-sphere}
\end{equation}
We regard $x$ as a vector of $k$ blocks, each containing $D$ coordinates.
For $i\in[k]$, the projection onto the $i$th block is
\begin{equation}
 \pi_i:\R^{kD}\to\R^D,
 \qquad
 \pi_i(x_1,\ldots,x_k)=x_i.
 \label{eq:block-projection}
\end{equation}

All polynomial variables represent real coordinates, although their
coefficients may be real or complex. We write
\[
 \R[x]=\R[x_1,\ldots,x_k],
 \qquad
 \C[x]=\C[x_1,\ldots,x_k],
\]
where each $x_i$ denotes a block of $D$ indeterminates.
On $\C[x]$, the involution $*$ conjugates coefficients and fixes the
indeterminates:
\[
 \left(\sum_\alpha c_\alpha x^\alpha\right)^*
 =\sum_\alpha\overline{c_\alpha}\,x^\alpha.
\]
For a matrix polynomial $F\in\C[x]^{s\times t}$, we define $F^\dagger$
by transposing $F$ and applying $*$ to each entry. In particular,
\[
 F^\dagger(x)=F(x)^\dagger
 \qquad\text{for every }x\in\R^{kD}.
\]

A matrix polynomial $F\in\C[x]^{s\times s}$ is \emph{Hermitian} if
$F^\dagger=F$. Equivalently, $F(x)\in\Herm(\C^s)$ for every
$x\in\R^{kD}$. In the coefficient expansion
\[
 F(x)=\sum_\alpha F_\alpha x^\alpha,
\]
this condition is equivalent to
$F_\alpha\in\Herm(\C^s)$ for every $\alpha$.
Consequently, for any fixed $C\in\Herm(\C^s)$, the scalar polynomial
$\Tr(CF)$ belongs to $\R[x]$.

For a scalar polynomial $p$, let $\deg_{x_i}p$ denote its total degree
in the coordinates of the block $x_i$. For a matrix polynomial $F$,
let $\deg_{x_i}F$ be the maximum of these degrees over its entries.
The corresponding \emph{block degrees} are
\begin{equation}
 \bdeg(p)=\max_{i\in[k]}\deg_{x_i}p,
 \qquad
 \bdeg(F)=\max_{i\in[k]}\deg_{x_i}F.
 \label{eq:bdeg}
\end{equation}

The equations defining $\cX$ generate the \emph{sphere ideal}
\begin{equation}
 \cI_{\cX}
 =\big\langle g_1,\ldots,g_k\big\rangle_{\R[x]},
 \qquad
 g_j(x_j)=\norm{x_j}_2^2-1.
 \label{eq:sphere-ideal}
\end{equation}
Its complexification is
\begin{equation}
 \cI_{\cX}^{\C}
 :=\big\langle g_1,\ldots,g_k\big\rangle_{\C[x]}
 =\left\{\sum_{j=1}^k g_jr_j:\ r_j\in\C[x]\right\}.
 \label{eq:complex-sphere-ideal}
\end{equation}
Since the generators have real coefficients,
\[
 \cI_{\cX}^{\C}=\cI_{\cX}+\mathrm{i}\cI_{\cX},
\]
and $\cI_{\cX}^{\C}$ is invariant under $*$, so it is a $*$-ideal.

For real scalar polynomials, congruence modulo $\cI_{\cX}$ means that
their difference belongs to $\cI_{\cX}$. For matrix polynomials, we
interpret ideal membership entrywise. Specifically, for $s\ge1$, set
\[
 M_s(\cI_{\cX}^{\C})
 :=
 \left\{
 H\in\C[x]^{s\times s}:
 H_{ab}\in\cI_{\cX}^{\C}\text{ for all }a,b
 \right\}.
\]
For $F,G\in\C[x]^{s\times s}$, we then write
\begin{equation}
 F\equiv G\pmod{\cI_{\cX}^{\C}}
 \quad\Longleftrightarrow\quad
 F-G\in M_s(\cI_{\cX}^{\C}).
 \label{eq:matrix-congruence}
\end{equation}
Equivalently, there exist matrix polynomials
$S_j\in\C[x]^{s\times s}$ such that
\[
 F-G=\sum_{j=1}^k g_jS_j.
\]

For an ambient scalar or matrix polynomial $F$ on $\R^{kD}$, we denote
its restriction to $\cX$ by $\left.F\right|_{\cX}$.
By \cref{lem:real-radical}, two real scalar polynomials agree on $\cX$
if and only if they are congruent modulo $\cI_{\cX}$.
Likewise, two complex matrix polynomials agree on $\cX$ if and only if
they are congruent modulo $\cI_{\cX}^{\C}$ in the entrywise sense of
\eqref{eq:matrix-congruence}.

\begin{definition}
\label{def:SOS-cones}
Fix an integer $L\ge1$. The scalar sum-of-squares cone $\Sigma_L(\cX)$
consists of the polynomials $p\in\R[x]$ for which there exist finitely
many $f_r\in\R[x]$, with $\bdeg(f_r)\le L$, satisfying
\begin{equation}
 p\equiv\sum_r f_r^2\pmod{\cI_{\cX}}.
 \label{eq:scalar-SOS}
\end{equation}

For $s\ge1$, the Hermitian matrix sum-of-squares cone
$\Sigma_L^{\Herm,s}(\cX)$ consists of the Hermitian matrix polynomials
$F\in\C[x]^{s\times s}$ for which there exist finitely many matrix
polynomials $R_r\in\C[x]^{s\times m_r}$, with $m_r\ge1$ and
$\bdeg(R_r)\le L$, satisfying
\begin{equation}
 F\equiv\sum_rR_rR_r^\dagger
 \pmod{\cI_{\cX}^{\C}}.
 \label{eq:matrix-SOS}
\end{equation}
\end{definition}

Every generator $g_j$ vanishes on $\cX$. Thus
$p\in\Sigma_L(\cX)$ implies $p(x)\ge0$ for every $x\in\cX$, while
$F\in\Sigma_L^{\Herm,s}(\cX)$ implies $F(x)\succeq0$ for every
$x\in\cX$.

\begin{remark}
\label{rem:real-complex-scalar-sos}
Using real coefficients for scalar SOS certificates is consistent with
allowing complex coefficients in matrix SOS factors. Indeed, if
$h=a+\mathrm{i}b\in\C[x]$, with $a,b\in\R[x]$, then $ hh^*=a^2+b^2.$
% \[
%  hh^*=a^2+b^2.
% \]
Hence scalar sums of Hermitian squares over $\C[x]$ coincide with
ordinary sums of squares over $\R[x]$, with the same degree bound.
This equivalence also holds modulo the sphere ideals, since $ \cI_{\cX}^{\C}\cap\R[x]=\cI_{\cX}.$

\end{remark}

\subsection{Positive squared kernels on the sphere}
\label{sec:prelim-kernels}

We recall the spherical harmonic decomposition and the squared-kernel
operators used in the convergence analysis. Their action on degree-two
harmonics will determine the approximation error of the hierarchy.
For further background on spherical harmonics and polynomial kernels
on the sphere, we refer to \cite{DunklXu2014,FangFawzi2021}.

Let $\mu_D$ be the unique rotation-invariant probability measure on
$\sphere^{D-1}$. For a fixed $x\in\sphere^{D-1}$, let $\nu_D$ denote
the distribution of $\ip{x}{y}$ when $y\sim\mu_D$. By rotation
invariance, this distribution is independent of the choice of $x$.

For $j\ge0$, let $\Pharm_j(\R^D)$ be the real vector space of
homogeneous harmonic polynomials of degree $j$ on $\R^D$. Their
restrictions to the sphere form the space of degree-$j$ spherical
harmonics,
\begin{equation}
 \cH_j^D
 :=\left\{
 \left.p\right|_{\sphere^{D-1}}:
 p\in\Pharm_j(\R^D)
 \right\}.
 \label{eq:ambient-spherical-harmonics}
\end{equation}
The restriction map in \eqref{eq:ambient-spherical-harmonics} is
injective. Indeed, if a homogeneous polynomial $p$ of degree $j$
vanishes on the sphere, then
\[
 p(x)=\norm{x}_2^j
 p\!\left(\frac{x}{\norm{x}_2}\right)=0
 \qquad\text{for every }x\ne0,
\]
and hence $p$ vanishes identically. Thus every element of $\cH_j^D$
has a unique ambient homogeneous harmonic representative.

The spaces $\cH_j^D$ are mutually orthogonal and give the orthogonal
decomposition of the real Hilbert space
\[
 L^2(\mu_D)=\widehat{\bigoplus}_{j\ge0}\cH_j^D;
\]
see \cite{DunklXu2014}. Rotation-invariant integral operators preserve
these harmonic subspaces. For a real vector space $\mathcal V$, we
write $\Pharm_j(\R^D)\otimes\mathcal V$ for the space of ambient
$\mathcal V$-valued homogeneous harmonic polynomial fields and
$\cH_j^D\otimes\mathcal V$ for their restrictions to the sphere.
Below, we will use this notation with $\mathcal V=\Herm(\C^s)$.

For each $j\ge0$, define the normalized zonal polynomial
\[
 g_{j,D}(t)=
 \begin{cases}
 \displaystyle
 \frac{C_j^{(D-2)/2}(t)}{C_j^{(D-2)/2}(1)},
 &D\ge3,\\[2ex]
 T_j(t),
 &D=2,
 \end{cases}
\]
where $C_j^\alpha$ is the Gegenbauer polynomial with parameter
$\alpha$ and $T_j$ is the Chebyshev polynomial of the first kind.
In both cases, $g_{j,D}(1)=1$.

Let $L\ge1$ and let $q$ be a real univariate polynomial of degree
at most $L$. Its associated \emph{squared-kernel operator} is
\begin{equation}
 (K_qf)(x)
 =\int_{\sphere^{D-1}}
 q(\ip{x}{y})^2f(y)\,d\mu_D(y),
 \qquad x\in\sphere^{D-1},
 \label{eq:single-kernel}
\end{equation}
defined for bounded measurable functions
$f:\sphere^{D-1}\to\C$. Since the kernel is real valued,
$K_q$ maps real-valued functions to real-valued functions.

We use the same notation for the entrywise extension of $K_q$ to
bounded measurable matrix-valued fields. In particular, for
$H:\sphere^{D-1}\to\Herm(\C^s)$,
\[
 (K_qH)(x)
 =\int_{\sphere^{D-1}}
 q(\ip{x}{y})^2H(y)\,d\mu_D(y)
 \in\Herm(\C^s).
\]
Thus the matrix-valued action is obtained by applying the scalar
operator to each matrix entry.

We normalize $q$ by requiring
\begin{equation}
 \int_{-1}^1q(t)^2\,d\nu_D(t)=1.
 \label{eq:kernel-normalization}
\end{equation}
This condition is equivalent to $K_q1=1$ on $\sphere^{D-1}$.
The Funk--Hecke formula \cite{DunklXu2014,FangFawzi2021} shows that
$K_q$ acts on $\cH_j^D$ by multiplication by
\begin{equation}
 \lambda_j(q)
 =\int_{-1}^1q(t)^2g_{j,D}(t)\,d\nu_D(t).
 \label{eq:funk-hecke-eigenvalue}
\end{equation}
The same formula holds on the complexification of $\cH_j^D$ and,
entrywise, on Hermitian matrix-valued harmonics. In particular,
\[
 K_qH=\lambda_j(q)H
 \qquad
 \text{for }H\in\cH_j^D\otimes\Herm(\C^s).
\]
Since $g_{0,D}=1$, we have
$\lambda_0(q)=\int q^2\,d\nu_D$, so the normalization
\eqref{eq:kernel-normalization} is precisely the condition
$\lambda_0(q)=1$.

The degree-two eigenvalue will be central to the convergence proof.
The corresponding zonal polynomial is
\begin{equation}
 g_{2,D}(t)=\frac{Dt^2-1}{D-1}.
 \label{eq:g2}
\end{equation}
At factor degree $L$, define the largest attainable degree-two
eigenvalue and the associated \emph{distortion constant} by
\begin{equation}
 \begin{aligned}
 \lambda_{D,L}
 &:=
 \max\left\{
 \lambda_2(q):
 q\in\R[t],\ \deg q\le L,\
 q\text{ satisfies }\eqref{eq:kernel-normalization}
 \right\},\\
 \rho_{D,L}
 &:=\lambda_{D,L}^{-1}-1.
 \end{aligned}
 \label{eq:lambda-rho-definition}
\end{equation}
To see that the maximum is attained, expand $q$ in an orthonormal
polynomial basis of $L^2(\nu_D)$ through degree $L$. These basis
polynomials are normalized Gegenbauer polynomials for $D\ge3$ and
normalized Chebyshev polynomials for $D=2$. If
$e\in\R^{L+1}$ is the coefficient vector, then
\[
 \int q^2\,d\nu_D=\norm{e}_2^2,
 \qquad
 \lambda_2(q)=e^\top T_{D,L}e,
\]
where $T_{D,L}$ is the real symmetric matrix described in
\cref{app:kernel-constant}. Maximization over $\norm{e}_2=1$
therefore gives
\[
 \lambda_{D,L}=\lmax(T_{D,L}).
\]

\begin{proposition}
\label{prop:kernel-constant}
For all $D\ge2$ and $L\ge1$, we have
\begin{equation}
\frac{2L}{D+2L}\le\lambda_{D,L}\le1,
\qquad
0\le\rho_{D,L}\le\frac{D}{2L}.
 \label{eq:lambda-basic}
\end{equation}
Moreover, there exist absolute constant $c_1>0$
such that
\begin{equation}
 \rho_{D,L}\le c_1\left(\frac{D}{L}\right)^2
 \qquad\text{for all }L\ge 1.
 \label{eq:FF-rate}
\end{equation}
\end{proposition}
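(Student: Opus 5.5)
Three estimates are needed: the upper bound $\lambda_{D,L}\le1$, the elementary lower bound $\lambda_{D,L}\ge 2L/(D+2L)$, and the quadratic rate \eqref{eq:FF-rate}. The bounds on $\rho_{D,L}$ then follow because $\rho=\lambda^{-1}-1$ is decreasing in $\lambda$.

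\emph{Upper bound.} For $t\in[-1,1]$ we have $g_{2,D}(t)=(Dt^2-1)/(D-1)\le1$. Since $q^2\ge0$, the normalization \eqref{eq:kernel-normalization} gives $\lambda_2(q)\le\int q^2\,d\nu_D=1$ for every admissible $q$. Hence $\lambda_{D,L}\le1$ and $\rho_{D,L}\ge0$.

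\emph{Lower bound.} By \eqref{eq:g2}, $\lambda_2(q)=(D\,m_2(q)-1)/(D-1)$, where $m_2(q)=\int t^2q^2\,d\nu_D$ for normalized $q$. It therefore suffices to exhibit one admissible $q$ with $\deg q\le L$ and
\[
 m_2(q)\ \ge\ \frac{1}{D}+\frac{D-1}{D}\cdot\frac{2L}{D+2L}.
\]
For the test polynomial I will try $q(t)\propto(1+t)^{L}$ (or $t^L$ when that is better). The measure $\nu_D$ has density proportional to $(1-t^2)^{(D-3)/2}$. Substituting $s=(1+t)/2$ turns every moment into a Beta integral. For $q\propto(1+t)^L$ this gives $\int(1+t)\,q^2\,d\nu_D$ and $\int(1+t)^2q^2\,d\nu_D$ as ratios of Gamma functions. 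From these, $m_2=\int\big((1+t)^2-2(1+t)+1\big)q^2\,d\nu_D$ is explicit. In Beta form, $q^2=(1+t)^{2L}$ gives exponents $a=2L+(D-1)/2$ and $b=(D-1)/2$ on $s$ and $1-s$, so $E[s]=a/(a+b)$ and $E[s^2]=a(a+1)/((a+b)(a+b+1))$. Writing $1+t=2s$,
\[
 E[t^2]=4E[s^2]-4E[s]+1,
\]
and I will simplify this to a rational function of $L$ and $D$. I expect this to produce exactly $\lambda_2=2L/(D+2L)$ or something at least as large. If the constant does not match, I will adjust the exponent, e.g.\ by taking $q^2\propto(1+t)^{L}$ with $q$ of degree $L/2$, or by switching to $t^{L}$ parity. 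This step is a routine but fiddly Beta-function computation.

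\emph{Quadratic rate.} Here I invoke the spherical kernel estimates of \cite{FangFawzi2021}, which I expect to be the main obstacle. Write $1-\lambda_2(q)=\frac{D}{D-1}\int(1-t^2)q^2\,d\nu_D$. It is then enough to find normalized $q$ of degree $L$ concentrated near $t=\pm1$ with $\int(1-t^2)q^2\,d\nu_D\lesssim D^2/L^2$. Since the weight $1-t^2$ is even, it suffices to minimize the Rayleigh quotient $e^\top(I-\tilde T)e$ over coefficient vectors, i.e.\ to bound the smallest eigenvalue of the multiplication-by-$(1-t^2)$ operator compressed to polynomials of degree $\le L$. This is governed by the largest zero of the degree-$(L+1)$ orthogonal polynomial for $\nu_D$. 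By Gauss quadrature, that zero satisfies $1-t_{\max}^2\lesssim D^2/L^2$, using the known asymptotics of the largest Gegenbauer zero: $1-t_{\max}\asymp j_\alpha^2/L^2$ with $j_\alpha\asymp D$. Concretely, I take $q$ to be the Christoffel--Darboux kernel polynomial at $t_{\max}$ (or its symmetrization toward $\pm1$), for which $\int(1-t^2)q^2\,d\nu_D=1-t_{\max}^2$.

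Combining the Bessel-zero bound $j_{\alpha,1}\le c(\alpha+1)$ with $\rho=(1-\lambda)/\lambda$ and $\lambda\ge 2L/(D+2L)$ handles the regime $L\gtrsim D$. In the regime $L\lesssim D$, the bound $\rho\le D/(2L)\le (D/L)^2$ already suffices. Together these give \eqref{eq:FF-rate} with an absolute constant $c_1$.
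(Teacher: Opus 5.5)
\textbf{Verdict.} The upper bound and the structure of the rate argument are fine, but the lower-bound step, as you set it up, fails.

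\textbf{The lower bound.} Carrying out your Beta computation for $q\propto(1+t)^L$ (parameters $a=2L+\tfrac{D-1}{2}$, $b=\tfrac{D-1}{2}$) gives $\int t\,q^2\,d\nu_D=\frac{2L}{2L+D-1}$ and $1-\int t^2q^2\,d\nu_D=\frac{(D-1)(4L+D-1)}{(2L+D-1)(2L+D)}$. Hence
\[
\lambda_2\bigl((1+t)^L\bigr)=\frac{2L}{D+2L}\cdot\frac{2L-1}{D+2L-1}<\frac{2L}{D+2L}
\]
for every $D\ge2$ and $L\ge1$. For instance, $D=3$, $L=1$ gives $1/10$ instead of $2/5$.

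The reason is that $g_{2,D}$ is even, so the odd part of $q^2$ is wasted. Moreover, near $t=1$ one has $\bigl(\tfrac{1+t}{2}\bigr)^{2L}\approx e^{-L(1-t)}$ versus $t^{2L}\approx e^{-2L(1-t)}$. So $(1+t)^L$ concentrates at only one pole, and at half the rate. Lowering the degree to $L/2$, your suggested repair, makes this worse.

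What works is the monomial you mention only in passing, which is the paper's choice. With $m_{2j}=\int t^{2j}\,d\nu_D$, integrating $\frac{d}{dt}\bigl[t^{2j+1}(1-t^2)^{(D-1)/2}\bigr]$ over $[-1,1]$ gives $(D+2j)m_{2j+2}=(2j+1)m_{2j}$. So $q=t^L/\sqrt{m_{2L}}$ satisfies
\[
\lambda_2(q)=\frac{1}{D-1}\Bigl(D\,\frac{2L+1}{D+2L}-1\Bigr)=\frac{2L}{D+2L}
\]
exactly. Without this computation, $\lambda_{D,L}\ge 2L/(D+2L)$ and $\rho_{D,L}\le D/(2L)$ remain unproved. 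Your small-$L$ regime of the rate argument relies on the latter. The $(1+t)^L$ value $\rho\le\frac{D(4L+D-1)}{2L(2L-1)}$ would still be $O\bigl((D/L)^2\bigr)$ for $L\lesssim D$, so only the constant would suffer there.

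\textbf{The quadratic rate.} This follows the paper's route. The rate is split the same way: a Fang--Fawzi-type estimate for $L\ge c_0D$, and the elementary bound $\rho\le D/(2L)\le\frac{c_0}{2}(D/L)^2$ below that threshold. Your Christoffel--Darboux sketch is exactly the argument the paper writes out for $D=2$ (top eigenvector of the truncated Jacobi matrix plus Cauchy--Schwarz). It needs two corrections.

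First, for that $q$ you only get $\int(1-t^2)q^2\,d\nu_D\le1-t_{\max}^2$, from $\int t^2q^2\,d\nu_D\ge\bigl(\int tq^2\,d\nu_D\bigr)^2=t_{\max}^2$. Equality fails because $tq$ has a nonzero $p_{L+1}$-component, but the inequality is all you need.

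Second, the asymptotic relation $1-t_{\max}\asymp j_\alpha^2/L^2$ does not by itself yield an absolute constant. You need a non-asymptotic lower bound on the largest Gegenbauer zero that is uniform in $D$ for all $L\ge c_0D$. That is precisely what the paper imports from \cite{FangFawzi2021} for $D\ge3$, so either prove such a bound or cite it as the paper does.
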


% The lower bound on $\lambda_{D,L}$ follows by choosing
% $q(t)=\sqrt D\,t$ and using the moments
% \[
%  \int_{-1}^1t^2\,d\nu_D(t)=\frac1D,
%  \qquad
%  \int_{-1}^1t^4\,d\nu_D(t)=\frac{3}{D(D+2)}.
% \]
The lower bound in \eqref{eq:lambda-basic} follows from the normalized monomial choice $q(t)\propto t^L$, for which $\lambda_2(q)=2L/(D+2L)$; see \cref{app:kernel-constant}.
% The estimate \eqref{eq:FF-rate} is the degree-two specialization
% of the kernel bound in \cite{FangFawzi2021}, from which the constants
% $c_0,c_1$ are inherited. The details are given in
% \cref{app:kernel-constant}. The structural arguments below require
% only $\lambda_{D,L}>0$, which \eqref{eq:lambda-basic} guarantees
% at every level $L\ge1$. The constants $c_0,c_1$ are needed only
% for the quantitative convergence bounds.
For sufficiently large $L$ relative to $D$, the estimate \eqref{eq:FF-rate} follows from the degree-two kernel bound in \cite{FangFawzi2021} for $D\ge3$ and a direct Chebyshev argument for $D=2$. The normalized monomial choice $q(t)\propto t^L$ extends the estimate to every $L\ge1$ after enlarging the absolute constant $c_1$ if necessary. Details are given in \cref{app:kernel-constant}. The structural arguments below require only $\lambda_{D,L}>0$, which \eqref{eq:lambda-basic} guarantees at every level $L\ge1$; the constant $c_1$ enters only the quantitative convergence bounds.

Finally, we extend the construction to the product space $\cX$
defined in \eqref{eq:product-sphere}. Set
\begin{equation}
 \begin{aligned}
 \cK_q&:=K_q^{\otimes k},\\
 (\cK_qf)(x)
 &=
 \int_{\cX}
 \left(\prod_{j=1}^kq(\ip{x_j}{y_j})^2\right)
 f(y)\,d\mu_D^{\otimes k}(y),
 \qquad x\in\cX.
 \end{aligned}
 \label{eq:product-kernel}
\end{equation}
As with $K_q$, this formula defines $\cK_q$ on bounded measurable
scalar functions, and its action on matrix-valued fields is
understood entrywise. In particular, $\cK_q$ preserves Hermitian
matrix-valued fields. Linearity of the integral also gives
\begin{equation}
 \Tr\!\left[(\cK_qH)(x)\right]
 =\bigl(\cK_q(\Tr H)\bigr)(x).
 \label{eq:kernel-trace-compatibility}
\end{equation}

Under \eqref{eq:kernel-normalization}, $\cK_q$ fixes constant scalar
and matrix fields. Its action on a harmonic field depending on a
single block follows by applying \eqref{eq:funk-hecke-eigenvalue}
in that block. For example, if
$H\in\cH_2^D\otimes\Herm(\C^s)$, then
\[
 \bigl(\cK_q(H\circ\pi_i)\bigr)(x)
 =\lambda_2(q)H(x_i),
 \qquad x\in\cX.
\]
The same identity holds for scalar degree-two harmonics and their
complexifications. Indeed, by Fubini's theorem, integration over
the active block gives the factor $\lambda_2(q)$, while every
other block contributes the factor $\lambda_0(q)=1$.

\section{The sum-of-squares hierarchy}
\label{sec:hierarchy}

We construct an SOS hierarchy for $\Psucc(\Phi,k)$ by first restricting the optimization to pure input states and expressing them in real coordinates. We then use state-discrimination duality to obtain an exact infinite-dimensional formulation and replace its pointwise positivity constraints by finite-degree SOS certificates.

\subsection{Reduction to pure input states}
\label{sec:pure-reduction}

Fix a POVM $(M_1,\ldots,M_k)$. For each message $i$, we have $\Tr[M_i\Phi(\rho_i)]=\Tr[\Phi^*(M_i)\rho_i]$, where $\Phi^*:\cL(B)\to\cL(A)$ is the adjoint of $\Phi$ with respect to the Hilbert--Schmidt inner product. Since $\Phi^*$ is positive, $\Phi^*(M_i)$ is Hermitian positive semidefinite. It follows that $\max_{\rho_i\in\cD(A)}\Tr[\Phi^*(M_i)\rho_i]=\lmax(\Phi^*(M_i))$, with the maximum attained by a rank-one projector onto a corresponding unit eigenvector. Equivalently, a linear functional on the compact convex set $\cD(A)$ attains its maximum at an extreme point, and these extreme points are precisely the pure states.

For a fixed POVM, the input states can be optimized independently. Each $\rho_i$ can therefore be replaced by a pure state without decreasing the objective. Since pure states are feasible in the original problem, restricting the encoder to pure states leaves the joint optimum unchanged. Consequently,
\begin{equation}
 \Psucc(\Phi,k)=\max_{\substack{u_1,\ldots,u_k\in\C^{d_A}\\ \norm{u_i}_2=1}}\max_{\substack{M_1,\ldots,M_k\succeq0\\ \sum_iM_i=I_B}}\frac1k\sum_{i=1}^k\Tr\!\left[M_i\Phi(\proj{u_i})\right].
 \label{eq:pure-state-Psucc}
\end{equation}
The optimization over the encoder is thus parameterized by $k$ complex unit vectors, which we next express in real polynomial coordinates.

\subsection{Parametrization by real spheres}
\label{sec:sphere-parametrization}

Set $D=2d_A$. For $v=(a,b)\in\R^{d_A}\times\R^{d_A}\simeq\R^D$, define the $\R$-linear map $u:\R^D\to\C^{d_A}$ by
\begin{equation}
 u(v)=a+\mathrm{i}b,
 \qquad
 \norm{u(v)}_2=\norm{v}_2.
 \label{eq:realification-map}
\end{equation}
Thus the complex unit vectors representing pure input states are parameterized by $\sphere^{D-1}$. The global-phase freedom in this representation is immaterial, since the channel output depends only on the projector $u(v)u(v)^\dagger$.

Define the ambient Hermitian matrix polynomial $\widetilde\sigma:\R^D\to\Herm(B)$ by $\widetilde\sigma(v):=\Phi(u(v)u(v)^\dagger)$. For each message $i\in[k]$, its dependence on the corresponding input block is described by
\begin{equation}
 \widetilde\sigma_i:=\widetilde\sigma\circ\pi_i,
 \qquad
 \sigma_i:=\left.\widetilde\sigma_i\right|_{\cX},
 \label{eq:sigma-definition}
\end{equation}
where $\pi_i$ is the block projection defined in \eqref{eq:block-projection} and $\cX=(\sphere^{D-1})^k$. In particular, $\sigma_i:\cX\to\cD(B)$ and $\sigma_i(x)=\Phi(\proj{u(x_i)})$ for every $x\in\cX$.

Each entry of $\widetilde\sigma_i$ is a complex-valued homogeneous quadratic polynomial in the real coordinates of $x_i$. Its restriction $\sigma_i$ has trace one on $\cX$, but the ambient polynomial need not be normalized away from $\cX$. Indeed, trace preservation gives the polynomial identity
\begin{equation}
 \Tr\widetilde\sigma_i(x)=\norm{x_i}_2^2,
 \qquad x\in\R^{kD}.
 \label{eq:ambient-sigma-trace}
\end{equation}

\subsection{State-discrimination duality and an infinite optimization problem}
\label{sec:exact-dual}

For fixed states $\omega_1,\ldots,\omega_k\in\cD(B)$ with uniform prior, the optimal success probability for minimum-error discrimination is
\begin{equation}
s(\omega_1,\ldots,\omega_k)=\max\left\{\frac1k\sum_{i=1}^k\Tr(M_i\omega_i):\ M_i\succeq0,\ \sum_{i=1}^kM_i=I_B\right\}.
\label{eq:discrimination-primal}
\end{equation}
The dual semidefinite program gives the equivalent expression \cite{Watrous2018}
\begin{equation}
s(\omega_1,\ldots,\omega_k)=\min\left\{\Tr Y:\ Y\in\Herm(B),\ Y\succeq\frac{\omega_i}{k}\ \text{for all }i\in[k]\right\}.
\label{eq:discrimination-dual}
\end{equation}
Indeed, the primal problem is strictly feasible with $M_i=I_B/k$, and the dual problem is strictly feasible with $Y=tI_B$ for sufficiently large $t$. Strong duality therefore holds, and both optima are attained. Every optimal dual solution satisfies $Y\succeq\omega_i/k\succeq0$ and $\Tr Y=s(\omega_1,\ldots,\omega_k)\le1$. Since the largest eigenvalue of a positive semidefinite operator is at most its trace, it follows that $0\preceq Y\preceq I_B$.

For the output states associated with the sphere parameters, define the value function
\begin{equation}
s:\cX\to\left[\tfrac1k,1\right],\qquad s(x):=s\bigl(\sigma_1(x),\ldots,\sigma_k(x)\bigr).
\label{eq:sx}
\end{equation}
Combining \eqref{eq:pure-state-Psucc} with \eqref{eq:discrimination-dual} yields
\begin{equation}
\Psucc(\Phi,k)=\max_{x\in\cX}s(x),\qquad s(x)=\min_{\substack{Y\in\Herm(B)}} \{ \Tr Y: Y\succeq\sigma_i(x)/k,\ i\in[k] \} \, .
\label{eq:parametric-dual}
\end{equation}
This eliminates the optimization over the decoding POVM exactly. The remaining nonconvexity lies in the dependence of the output states $\sigma_i(x)$ on the sphere parameters.

The pointwise dual problems can be combined into a single infinite-dimensional optimization problem:
\begin{equation}
\begin{aligned}
\Psucc(\Phi,k)=\inf_{\gamma\in\R,\ Y(\cdot)}\quad &\gamma\\
\text{subject to}\quad &Y(x)\succeq\frac{\sigma_i(x)}k, &&x\in\cX,\ i\in[k],\\
&\gamma-\Tr Y(x)\ge0, &&x\in\cX.
\end{aligned}
\label{eq:infinite-dual-envelope}
\end{equation}
Here $Y(\cdot)$ ranges over all fields $\cX\to\Herm(B)$, with no regularity assumption at this stage. To verify the equality, observe that every feasible pair satisfies $\gamma\ge\Tr Y(x)\ge s(x)$ for all $x\in\cX$, and hence $\gamma\ge\Psucc(\Phi,k)$. Conversely, choose an optimal dual solution $Y(x)$ for each $x\in\cX$. Then $\Tr Y(x)=s(x)\le\Psucc(\Phi,k)$, so this field together with $\gamma=\Psucc(\Phi,k)$ is feasible and attains the claimed value.

\subsection{The finite SOS hierarchy and its SDP representation}
\label{sec:finite-hierarchy}

Restricting the matrix field in \eqref{eq:infinite-dual-envelope} to a polynomial of bounded block degree and replacing its pointwise positivity constraints by SOS certificates gives the following hierarchy.

\begin{definition}
\label{def:hierarchy}
For an integer $L\ge1$, define
\begin{equation}
\begin{aligned}
U_L(\Phi,k):=\inf_{\gamma\in\R,\ Y}\quad &\gamma\\
\text{subject to}\quad &Y-\frac1k\widetilde\sigma_i\in\Sigma_L^{\Herm,d_B}(\cX), &&i\in[k],\\
&\gamma-\Tr Y\in\Sigma_L(\cX),\\
&Y^\dagger=Y,\quad \bdeg(Y)\le2L,
\end{aligned}
\label{eq:UL-definition}
\end{equation}
where $Y\in\C[x]^{d_B\times d_B}$ and $\widetilde\sigma_i$ is the ambient quadratic matrix polynomial defined in \eqref{eq:sigma-definition}. The SOS constraints are understood as congruences between ambient polynomials modulo the corresponding sphere ideals, as in \cref{sec:prelim-sos}.
\end{definition}

The SDP representation follows from the Gram representation and realification properties established in \cref{lem:realification}. Let $z_L(x)\in\R^{N_{D,L,k}}$ be the column vector of all monomials of block degree at most $L$, where
\begin{equation}
N_{D,L,k}=\binom{D+L}{L}^{\!k}.
\label{eq:monomial-count}
\end{equation}
A Hermitian matrix polynomial $F$ of size $s$ belongs to $\Sigma_L^{\Herm,s}(\cX)$ if and only if there exists a Hermitian positive semidefinite matrix $Q\in\Herm(\C^{N_{D,L,k}s})$ such that
\begin{equation}
F\equiv(z_L^\top\otimes I_s)\,Q\,(z_L\otimes I_s)\pmod{\cI_{\cX}^{\C}}.
\label{eq:Gram-representation}
\end{equation}
For the scalar cone, the corresponding representation is $p\equiv z_L^\top Q_0z_L\pmod{\cI_{\cX}}$, with $Q_0$ real symmetric and positive semidefinite. Moreover, writing a Hermitian matrix as $Q=Q_{\mathrm R}+\mathrm{i}Q_{\mathrm I}$, where $Q_{\mathrm R}^\top=Q_{\mathrm R}$ and $Q_{\mathrm I}^\top=-Q_{\mathrm I}$, gives
\[
Q\succeq0\quad\Longleftrightarrow\quad
\begin{pmatrix}Q_{\mathrm R}&-Q_{\mathrm I}\\ Q_{\mathrm I}&Q_{\mathrm R}\end{pmatrix}\succeq0.
\]
Thus each complex Hermitian semidefinite constraint can be expressed as a real symmetric semidefinite constraint of twice the order. The degree control needed to express the polynomial congruences by finitely many affine equations is established in \cref{lem:quotient-degree}.

\begin{proposition}
\label{prop:finite-SDP}
For every $L\ge1$, problem \eqref{eq:UL-definition} admits a finite-dimensional SDP representation. The scalar Gram matrix is real symmetric of order $N_{D,L,k}$, and each of the $k$ matrix Gram matrices is complex Hermitian of order $d_BN_{D,L,k}$. After realification, each complex Hermitian Gram block has real symmetric order $2d_BN_{D,L,k}$. With $D=2d_A$, one has $N_{D,L,k}=(d_A+L)^{O(kL)}$, which is polynomial in $d_A$ for fixed $k$ and $L$.
\end{proposition}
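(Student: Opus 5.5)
The plan is to turn each constraint of \eqref{eq:UL-definition} into Gram-matrix conditions plus finitely many affine equations. First, parametrize the decision variable $Y$ by its coefficients: since $\bdeg(Y)\le2L$, we write $Y=\sum_{\alpha}Y_\alpha x^\alpha$, with $\alpha$ ranging over monomials of block degree at most $2L$. The Hermiticity constraint $Y^\dagger=Y$ becomes $Y_\alpha\in\Herm(\C^{d_B})$, which is a finite set of real linear constraints. Next, we invoke the Gram representation \eqref{eq:Gram-representation}. For each $i\in[k]$, the constraint $Y-\widetilde\sigma_i/k\in\Sigma_L^{\Herm,d_B}(\cX)$ is equivalent to the existence of $Q_i\succeq0$ in $\Herm(\C^{N_{D,L,k}d_B})$ with
\[
Y-\tfrac1k\widetilde\sigma_i\equiv(z_L^\top\otimes I_{d_B})Q_i(z_L\otimes I_{d_B})\pmod{\cI_{\cX}^{\C}}.
\]
Likewise, $\gamma-\Tr Y\in\Sigma_L(\cX)$ is equivalent to $\gamma-\Tr Y\equiv z_L^\top Q_0z_L\pmod{\cI_{\cX}}$ for some real symmetric $Q_0\succeq0$ of order $N_{D,L,k}$. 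The equivalence between factorizations and Gram matrices is the standard one from \cref{lem:realification}. Given $F\equiv\sum_rR_rR_r^\dagger$, we write $R_r=(z_L^\top\otimes I_s)C_r$ and take $Q=\sum_rC_rC_r^\dagger$. Conversely, we factor $Q=CC^\dagger$.

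The key step, and the main obstacle, is expressing the congruences modulo the sphere ideal as finitely many affine equations in $(\gamma,Y_\alpha,Q_0,Q_i)$. Both sides of each congruence have block degree at most $2L$. The issue is that an ideal membership $F-G=\sum_jg_jS_j$ a priori allows multipliers $S_j$ of unbounded degree. Here we rely on \cref{lem:quotient-degree}. The approach is to fix a linear normal form modulo $\cI_{\cX}$ that acts block by block: in each block $x_j$, reduce using $x_{j,D}^2=1-\sum_{l<D}x_{j,l}^2$. This reduction does not increase block degree. A polynomial lies in the ideal exactly when its normal form vanishes, because the normal-form monomials restrict to linearly independent functions on $\cX$ (here \cref{lem:real-radical} is used). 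The normal form is a linear map on the finite-dimensional space of polynomials of block degree at most $2L$. Hence each congruence becomes the vanishing of finitely many linear functionals of the coefficients, which are affine in $\gamma$, $Y_\alpha$ and the Gram entries. The matrix case is handled entrywise, with real and imaginary parts treated separately. This uses $\cI_{\cX}^{\C}=\cI_{\cX}+\mathrm{i}\cI_{\cX}$.

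Finally, we realify each complex Hermitian $Q_i$ using the block criterion stated before the proposition. This turns each $Q_i$ into a real symmetric constraint of order $2d_BN_{D,L,k}$, and the objective $\gamma$ is linear, so the result is a standard SDP. For the size count, we use $N_{D,L,k}=\binom{D+L}{L}^k$ with $D=2d_A$. The bound $\binom{2d_A+L}{L}\le(2d_A+L)^{L}$ gives $N_{D,L,k}\le(2d_A+L)^{kL}=(d_A+L)^{O(kL)}$. This is polynomial in $d_A$ when $k$ and $L$ are fixed.
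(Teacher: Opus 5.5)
Your proposal is correct, but it handles the ideal congruences differently from the paper. The paper keeps the sphere-ideal multipliers as unknowns. It writes $Y-\frac1k\widetilde\sigma_i=(z_L^\top\otimes I_{d_B})Q_i(z_L\otimes I_{d_B})+\sum_jg_jS_{ij}$ and $\gamma-\Tr Y=z_L^\top Q_0z_L+\sum_jg_jr_j$. It then uses \cref{lem:quotient-degree} only to bound the block degrees of $S_{ij}$ and $r_j$, so that coefficient matching involves finitely many unknowns. You instead eliminate the multipliers by applying the linear normal-form map $x_{j,D}^2\mapsto1-\sum_{l<D}x_{j,l}^2$. Each congruence then becomes the vanishing of a linear image of an expression affine in $(\gamma,Y_\alpha,Q_0,Q_i)$.

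Your formulation gives a leaner SDP, with no multiplier variables and one equation per standard monomial rather than per monomial of block degree at most $2L$. The paper's formulation needs only the existence of degree-bounded multipliers and matches the usual SOS-toolbox form. The two are equivalent: a polynomial of block degree at most $2L$ has vanishing normal form exactly when it admits multipliers with the bounds of \cref{lem:quotient-degree}.

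One step should be justified differently. \cref{lem:real-radical} does not by itself show that the standard monomials (each $x_{j,D}$ with exponent at most one) restrict to linearly independent functions on $\cX$. That lemma says nothing about whether a nonzero reduced polynomial can lie in $\cI_{\cX}$. The fact you actually need, ``$p\in\cI_{\cX}$ iff its normal form vanishes'', is the Gr\"obner-basis property of $\{g_1,\ldots,g_k\}$. The paper establishes this via Buchberger's product criterion in the proof of \cref{lem:quotient-degree}, and it gives the equivalence directly, with no appeal to \cref{lem:real-radical}. Alternatively, you can check independence on $\cX$ directly. Write a reduced polynomial as $\sum_{\epsilon\in\{0,1\}^k}\prod_jx_{j,D}^{\epsilon_j}a_\epsilon$, with $a_\epsilon$ independent of the last coordinate in each block, and use the sign flips $x_{j,D}\mapsto-x_{j,D}$, which preserve $\cX$.

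Your Gram orders, realification, and size estimate are fine.
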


\begin{proof}
The decision variables consist of the coefficients of $Y$, the $k$ Hermitian positive semidefinite Gram matrices $Q_i$ for the matrix slacks, a real symmetric positive semidefinite Gram matrix $Q_0$ for the scalar slack, and polynomial multipliers for the sphere constraints. By \eqref{eq:Gram-representation}, the matrix constraints are equivalent to the existence of $Q_i\succeq0$ and $S_{ij}\in\C[x]^{d_B\times d_B}$ satisfying
\[
Y-\frac1k\widetilde\sigma_i=(z_L^\top\otimes I_{d_B})Q_i(z_L\otimes I_{d_B})+\sum_{j=1}^kg_jS_{ij},\qquad i\in[k].
\]
Similarly, the scalar constraint is equivalent to the existence of $Q_0\succeq0$ and $r_j\in\R[x]$ satisfying
\[
\gamma-\Tr Y=z_L^\top Q_0z_L+\sum_{j=1}^kg_jr_j.
\]
The matrix multipliers $S_{ij}$ encode membership in $M_{d_B}(\cI_{\cX}^{\C})$ and carry no positivity constraints. Since each remainder has block degree at most $2L$, \cref{lem:quotient-degree} allows the multiplier of $g_j$ to be chosen with degree at most $2L-2$ in block $x_j$ and at most $2L$ in every other block, entrywise for matrix multipliers. All polynomial variables therefore have finitely many coefficients. Matching coefficients gives affine equations; for complex identities, their real and imaginary parts are matched separately. Together with the Gram constraints and the realification described above, these equations give a finite-dimensional SDP. The Gram matrix orders follow from \eqref{eq:monomial-count}.
\end{proof}

The SOS constraints ensure that every level gives an upper bound on the optimal success probability. These bounds are nonincreasing with the level and satisfy the same upper bound of one as the original problem.

\begin{proposition}
\label{prop:basic-properties}
For every $L\ge1$, we have
\begin{equation}
\Psucc(\Phi,k)\le U_L(\Phi,k)\le1,\qquad U_{L+1}(\Phi,k)\le U_L(\Phi,k).
\label{eq:basic-hierarchy-bounds}
\end{equation}
\end{proposition}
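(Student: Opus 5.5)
Three claims need proof: the lower bound $\Psucc\le U_L$, the upper bound $U_L\le1$, and monotonicity in $L$. Each follows directly from the definitions in \cref{def:hierarchy} and the infinite formulation \eqref{eq:infinite-dual-envelope}.

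\emph{Lower bound.} Take any feasible pair $(\gamma,Y)$ for \eqref{eq:UL-definition}. Since every generator $g_j$ vanishes on $\cX$, membership in the SOS cones implies pointwise positivity on $\cX$. Concretely, for every $x\in\cX$ we get $Y(x)-\sigma_i(x)/k\succeq0$ for each $i$, and $\gamma-\Tr Y(x)\ge0$. Here we use that $\widetilde\sigma_i|_{\cX}=\sigma_i$ and that a Hermitian polynomial is Hermitian-valued at real points. So the restriction $Y|_{\cX}$ together with $\gamma$ is feasible for \eqref{eq:infinite-dual-envelope}, and therefore $\gamma\ge\Psucc(\Phi,k)$. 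Taking the infimum gives $\Psucc\le U_L$. If the feasible set were empty we would have $U_L=+\infty$, but the next step shows it is nonempty.

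\emph{Upper bound $U_L\le1$.} We exhibit an explicit feasible point with $\gamma=1$. Take $Y=\frac1k\sum_{j=1}^k\widetilde\sigma_j$, which is Hermitian with block degree $2\le2L$. Then $Y-\frac1k\widetilde\sigma_i=\frac1k\sum_{j\ne i}\widetilde\sigma_j$. Each $\widetilde\sigma_j(x)=\sum_a K_a u(x_j)u(x_j)^\dagger K_a^\dagger=\sum_a R_aR_a^\dagger$, where $R_a:=K_a u(x_j)$ is a $d_B\times1$ matrix polynomial, linear in $x_j$ and of block degree $1\le L$, with complex coefficients. Hence each $\widetilde\sigma_j$ is an exact Hermitian SOS, and the sum lies in $\Sigma_L^{\Herm,d_B}(\cX)$. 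For the scalar constraint, use \eqref{eq:ambient-sigma-trace}: $1-\Tr Y=1-\frac1k\sum_j\norm{x_j}_2^2=-\frac1k\sum_j g_j\equiv0$, and $0\in\Sigma_L(\cX)$.

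An alternative feasible point is $Y=\frac1k I_{d_B}\cdot$ (suitable scaling), but the choice above is cleaner and attains exactly $\gamma=1$.

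\emph{Monotonicity.} Since $\Sigma_L\subseteq\Sigma_{L+1}$ for both the scalar and the matrix cones, and $\bdeg(Y)\le2L$ implies $\bdeg(Y)\le2L+2$, every level-$L$ feasible pair is also feasible at level $L+1$. Hence $U_{L+1}\le U_L$.

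No step is a real obstacle. The only points needing care are writing $\widetilde\sigma_j$ as an exact Hermitian SOS via its Kraus operators, and using the ambient trace identity to handle $1-\Tr Y$ modulo the sphere ideal.
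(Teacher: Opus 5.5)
Your proposal is correct and follows the same route as the paper. The lower bound comes from evaluating the SOS constraints pointwise on $\cX$. The bound $U_L\le1$ comes from the feasible pair $(1,\frac1k\sum_j\widetilde\sigma_j)$, certified by the Kraus decomposition together with the ambient trace identity. Monotonicity comes from nestedness of the cones and of the degree bounds. Your aside about taking $Y\propto I_{d_B}$ is harmless, but it would give an objective of at least $d_B/k$ for channels with pure outputs, which can exceed $1$; you are right not to rely on it.
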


\begin{proof}
Let $(\gamma,Y)$ be feasible for \eqref{eq:UL-definition}. Evaluating the SOS constraints on $\cX$ gives $Y(x)\succeq\sigma_i(x)/k$ for every $i\in[k]$ and $\gamma\ge\Tr Y(x)$. By \eqref{eq:parametric-dual}, it follows that $\gamma\ge s(x)$ for every $x\in\cX$, and hence $\gamma\ge\Psucc(\Phi,k)$. Taking the infimum proves the lower bound. Monotonicity follows because both the SOS cones and the admissible polynomial spaces are nested as $L$ increases.

To prove $U_L(\Phi,k)\le1$, set $Y_{\mathrm{av}}:=\frac1k\sum_{j=1}^k\widetilde\sigma_j$, which is Hermitian and has block degree at most two. The Kraus representation \eqref{eq:kraus} gives
\begin{equation}
\widetilde\sigma_j(x)=\sum_{a=1}^r\bigl(K_au(x_j)\bigr)\bigl(K_au(x_j)\bigr)^\dagger.
\label{eq:sigma-HSOS}
\end{equation}
Each vector $K_au(x_j)$ has entries that are complex linear forms in the real coordinates of $x_j$, so $\widetilde\sigma_j\in\Sigma_1^{\Herm,d_B}(\cX)$. Consequently, for every $i\in[k]$,
\[
Y_{\mathrm{av}}-\frac1k\widetilde\sigma_i=\frac1k\sum_{j\ne i}\widetilde\sigma_j\in\Sigma_1^{\Herm,d_B}(\cX)\subseteq\Sigma_L^{\Herm,d_B}(\cX).
\]
Furthermore, the ambient trace identity \eqref{eq:ambient-sigma-trace} implies
\[
1-\Tr Y_{\mathrm{av}}=-\frac1k\sum_{j=1}^k\bigl(\norm{x_j}_2^2-1\bigr)\in\cI_{\cX}.
\]
Thus $1-\Tr Y_{\mathrm{av}}$ is congruent to zero modulo $\cI_{\cX}$ and belongs to $\Sigma_L(\cX)$. The pair $(1,Y_{\mathrm{av}})$ is therefore feasible at every level $L\ge1$, proving the upper bound.
\end{proof}

\section{Quadratic convergence}
\label{sec:convergence-analysis}

% The bounds in \cref{prop:basic-properties} show that $U_L(\Phi,k)$ is a nonincreasing upper bound on $\Psucc(\Phi,k)$. To prove convergence, it suffices to construct, for each level $L$, a feasible pair $(\gamma_L,Y_L)$ for \eqref{eq:UL-definition} whose objective approaches $\Psucc(\Phi,k)$. Indeed, feasibility gives $\Psucc(\Phi,k)\le U_L(\Phi,k)\le\gamma_L$, so an estimate on $\gamma_L-\Psucc(\Phi,k)$ directly controls the hierarchy's error. This section constructs such a pair using positive squared kernels and establishes a quadratic convergence rate.
The bounds in \cref{prop:basic-properties} show that $U_L(\Phi,k)$ is a nonincreasing upper bound on $\Psucc(\Phi,k)$. To prove convergence, it suffices to construct, for each level $L$, a feasible pair $(\hat{\gamma}_L,\hat{Y}_L)$ for \eqref{eq:UL-definition} whose objective approaches $\Psucc(\Phi,k)$. Indeed, feasibility gives $\Psucc(\Phi,k)\le U_L(\Phi,k)\le\hat{\gamma}_L$, so an estimate on $\hat{\gamma}_L-\Psucc(\Phi,k)$ directly controls the hierarchy's error. This section constructs such a pair using positive squared kernels and establishes a quadratic convergence rate.

Throughout this section, $\cK_q$ denotes the product kernel defined in \cref{sec:prelim-kernels}, including its entrywise action on Hermitian matrix-valued fields. The argument of $\cK_q$ determines whether its action is scalar or matrix-valued.

\medskip
\noindent\textbf{Difficulties in proving convergence.} The first difficulty is that pointwise optimal dual solutions of \eqref{eq:infinite-dual-envelope} need not form a polynomial field. Before applying the kernel, a Borel measurable choice of these solutions is required.

\begin{lemma}
\label{lem:measurable-selector}
The function $s:\cX\to\R$ defined in \eqref{eq:sx} is continuous, and there exists a Borel measurable map $Y_*:\cX\to\Herm(B)$ such that, for every $x\in\cX$,
\begin{equation}
Y_*(x)\succeq\frac{\sigma_i(x)}k\quad(i\in[k]),\qquad \Tr Y_*(x)=s(x),\qquad 0\preceq Y_*(x)\preceq I_B.
\label{eq:Ystar-properties}
\end{equation}
\end{lemma}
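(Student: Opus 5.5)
The plan has three parts: prove continuity of $s$, obtain a measurable selector from a standard selection theorem, and then read off the bounds in \eqref{eq:Ystar-properties} from the remarks after \eqref{eq:discrimination-dual}.

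\emph{Continuity of $s$.} The map $x\mapsto(\sigma_1(x),\ldots,\sigma_k(x))$ is polynomial, hence continuous, and takes values in $\cD(B)^k$. It therefore suffices to show that $s(\omega_1,\ldots,\omega_k)$ is continuous in the states. I will use the primal form \eqref{eq:discrimination-primal}. The POVM set $\mathcal M=\{(M_i):M_i\succeq0,\ \sum_iM_i=I_B\}$ is compact and independent of $\omega$. For fixed $(M_i)$ the objective is linear in $\omega$, and $\abs{\Tr M_i(\omega_i-\omega_i')}\le\norm{\omega_i-\omega_i'}_1$ because $\norm{M_i}_\infty\le1$. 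A maximum of uniformly $1$-Lipschitz functions is Lipschitz, so
\[
 \abs{s(\omega)-s(\omega')}\le\max_i\norm{\omega_i-\omega_i'}_1 .
\]
Hence $s$ is continuous on $\cX$.

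\emph{Measurable selection.} Set $\mathcal Y=\{Y\in\Herm(B):0\preceq Y\preceq I_B\}$, which is compact. Define the set-valued map
\[
 F(x)=\Bigl\{Y\in\mathcal Y:\ Y\succeq\sigma_i(x)/k\ (i\in[k]),\ \Tr Y=s(x)\Bigr\}.
\]
By the discussion after \eqref{eq:discrimination-dual}, every optimal dual solution lies in $\mathcal Y$, so $F(x)$ is nonempty. It is also closed. The graph $\{(x,Y):Y\in F(x)\}$ is closed in $\cX\times\mathcal Y$, because the constraints $\lmin(Y-\sigma_i(x)/k)\ge0$ and $\Tr Y-s(x)=0$ involve continuous functions, the continuity of $s$ having just been established. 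A compact-valued correspondence into a compact metric space with closed graph is upper hemicontinuous, hence weakly measurable. The Kuratowski--Ryll-Nardzewski selection theorem then gives a Borel measurable $Y_*$ with $Y_*(x)\in F(x)$ for all $x$, and membership in $F(x)$ yields \eqref{eq:Ystar-properties} directly.

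If I want to avoid the selection theorem, I will instead pick the unique minimizer of a strictly convex tie-breaker $\norm{Y}_2^2$ over $F(x)$, which is convex and compact. A closed-graph argument shows that this argmin map has closed graph with compact range, so it is continuous. That gives a continuous, hence Borel, $Y_*$. This is a self-contained route, and I would verify it via the usual subsequence argument: take limits of minimizers $Y(x_n)$ and compare against elements of $F(x)$. The subtle point is lower hemicontinuity of $F$, which the subsequence argument may not give. For that reason I will fall back on the selection theorem as the primary route.

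The main obstacle is this measurability/regularity step. The data-dependent equality constraint $\Tr Y=s(x)$ is exactly why continuity of $s$ must be proved first.
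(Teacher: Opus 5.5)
Your proof is correct and is essentially the paper's alternative argument in Appendix C. There, too, the optimal-solution correspondence $x\mapsto\{Y:0\preceq Y\preceq I_B,\ Y\succeq\sigma_i(x)/k\ (i\in[k]),\ \Tr Y=s(x)\}$ is shown to have nonempty compact values and closed graph (using continuity of $s$), hence to be weakly measurable, and Kuratowski--Ryll-Nardzewski then gives the selector. The differences are minor. You prove continuity of $s$ by an explicit Lipschitz bound rather than Berge's maximum theorem. The paper's primary route instead applies the measurable maximum theorem to the feasibility correspondence without the constraint $\Tr Y=s(x)$, so its selection step does not rely on continuity of $s$.
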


The proof uses Berge's maximum theorem and the measurable maximum theorem \cite{AliprantisBorder2006} and is given in \cref{app:selection}. That appendix also provides an alternative argument using the Kuratowski--Ryll-Nardzewski selection theorem \cite{KuratowskiRyllNardzewski1965}.

The binary case illustrates the difficulty with polynomial dependence and motivates the use of positive kernels. Suppose $k=2$ and, for a fixed $x\in\cX$, suppress the dependence on $x$. Write $\Delta=\sigma_1-\sigma_2$ and let $\Delta=\Delta_+-\Delta_-$ be its Jordan decomposition, where $\Delta_\pm\succeq0$ have orthogonal supports. A pointwise optimal solution of \eqref{eq:discrimination-dual} is
\begin{equation}
Y_*=\frac14\left(\sigma_1+\sigma_2+\abs{\Delta}\right),\qquad Y_*-\frac{\sigma_1}{2}=\frac{\Delta_-}{2},\qquad Y_*-\frac{\sigma_2}{2}=\frac{\Delta_+}{2}.
\label{eq:binary-dual-formula}
\end{equation}
The identities $\Delta_\pm=(\abs{\Delta}\pm\Delta)/2$ verify feasibility, while $\Tr Y_*=\frac12+\frac14\norm{\Delta}_1$ equals the Holevo--Helstrom value \cite{Helstrom1969} and therefore proves optimality.

As a function of the sphere parameters, this solution involves $\abs{\Delta(x)}$, which need not be polynomial and can be nonsmooth when an eigenvalue of $\Delta(x)$ passes through zero. The inverse-kernel argument in \cite{FangFawzi2021} starts from a polynomial input and therefore does not apply directly to this field. A uniform polynomial approximation of $Y_*$ would introduce an additive error controlled by the modulus of continuity of the fields obtained from the Jordan decomposition; such an estimate alone would not preserve the desired multiplicative form. The construction in \cref{prop:channel-certificate} instead applies the kernel directly to the positive slack fields. In the binary case, these fields are obtained from the pointwise Jordan decomposition in \eqref{eq:binary-dual-formula}, and their smoothed versions admit exact polynomial SOS representations.

The second difficulty is that applying the kernel also changes the nonconstant components of the constraint fields $\sigma_i$. The correction relies on their common structure: all output fields have the same mean, and their deviations from this mean lie in a single degree-two harmonic eigenspace. Consequently, the kernel scales every deviation by the same scalar, allowing its effect on the constraints to be corrected exactly.

%%%%%%%%%%%%%%%%%%%%%%%%%%

\subsection{Kernel certificates and convergence}
\label{sec:kernel-certificate}
\label{sec:convergence}

We first establish the common mean and harmonic structure of the output fields and show that positive squared kernels produce finite-degree SOS certificates from positive measurable fields. These properties allow us to construct a feasible pair $(\hat{\gamma}_L,\hat{Y}_L)$ for \eqref{eq:UL-definition} and bound its objective value, yielding the quadratic convergence rate.

\medskip
\noindent\textbf{Common mean and harmonic structure.} The output fields $\sigma_i$ share the same spherical mean, and their deviations from this mean belong to a single degree-two harmonic eigenspace. Define
\begin{equation}
\tau:=\Phi\!\left(\frac{I_A}{d_A}\right)\in\cD(B).
\label{eq:tau-definition}
\end{equation}

\begin{lemma}
\label{lem:common-mean-harmonic}
Let $D=2d_A$. The ambient matrix polynomial
\begin{equation}
\widetilde h(v):=\widetilde\sigma(v)-\norm{v}_2^2\,\tau,\qquad v\in\R^D,
\label{eq:ambient-h-definition}
\end{equation}
belongs to $\Pharm_2(\R^D)\otimes\Herm(B)$. Let $h:=\left.\widetilde h\right|_{\sphere^{D-1}}\in\cH_2^D\otimes\Herm(B)$. Then
\begin{equation}
\sigma_i(x)=\tau+h(x_i),\qquad x\in\cX,\ i\in[k],
\label{eq:sigma-harmonic-decomposition}
\end{equation}
whereas the corresponding ambient identity is
\begin{equation}
\widetilde\sigma_i(x)=\norm{x_i}_2^2\,\tau+\widetilde h(x_i),\qquad x\in\R^{kD},\ i\in[k].
\label{eq:ambient-sigma-harmonic-decomposition}
\end{equation}
Consequently, for every real polynomial $q$ satisfying \eqref{eq:kernel-normalization}, with $\lambda:=\lambda_2(q)$, we have
\begin{equation}
(\cK_q\sigma_i)(x)=\tau+\lambda h(x_i),\qquad x\in\cX,\ i\in[k].
\label{eq:kernel-on-sigma}
\end{equation}
\end{lemma}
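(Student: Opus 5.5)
The argument has three steps: show that $\widetilde h$ is a harmonic homogeneous quadratic, restrict the ambient identity to the sphere, and apply the Funk--Hecke action of $\cK_q$ blockwise.

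\emph{Homogeneity and harmonicity.} By \eqref{eq:realification-map}, $u(v)=a+\mathrm{i}b$ is $\R$-linear in $v=(a,b)$. Hence every entry of $\widetilde\sigma(v)=\Phi(u(v)u(v)^\dagger)$ is a homogeneous quadratic form in $v$, and so is $\norm{v}_2^2\tau$. Since $\Phi$ preserves Hermiticity, $\widetilde h(v)$ is Hermitian for every $v$, so $\widetilde h$ is a homogeneous quadratic Hermitian matrix polynomial.

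To check harmonicity, write a homogeneous quadratic as $v^\top Mv$. Its Laplacian is $2\Tr M$, which equals $2\int_{\sphere^{D-1}}v^\top Mv\,d\mu_D\cdot D$ up to the factor $D$, because $\int vv^\top\,d\mu_D=I_D/D$. So it suffices to show that the spherical mean of $\widetilde h$ vanishes, entrywise. By linearity of $\Phi$,
\[
\int\widetilde\sigma\,d\mu_D=\Phi\!\left(\int u(v)u(v)^\dagger\,d\mu_D(v)\right).
\]
The vector $u(v)$ is uniformly distributed on the complex unit sphere of $\C^{d_A}$, and this distribution is $U(d_A)$-invariant. The matrix $\int u u^\dagger\,d\mu_D$ therefore commutes with all unitaries and has trace one, so it equals $I_A/d_A$. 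Alternatively, one can compute directly that $\E[a_ja_l+b_jb_l]=2\delta_{jl}/D$ and that the cross terms $\E[a_jb_l-b_ja_l]$ vanish, again giving $I_A/d_A$. Hence $\int\widetilde\sigma\,d\mu_D=\tau$, while $\int\norm{v}_2^2\tau\,d\mu_D=\tau$. The mean of $\widetilde h$ is zero, so its Laplacian vanishes and $\widetilde h\in\Pharm_2(\R^D)\otimes\Herm(B)$.

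\emph{Decompositions.} The ambient identity \eqref{eq:ambient-sigma-harmonic-decomposition} is the definition \eqref{eq:ambient-h-definition} evaluated at $v=x_i$, using $\widetilde\sigma_i=\widetilde\sigma\circ\pi_i$. Restricting to $\cX$, where $\norm{x_i}_2=1$, gives \eqref{eq:sigma-harmonic-decomposition}.

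\emph{Kernel action.} By \eqref{eq:sigma-harmonic-decomposition}, on $\cX$ we have $\sigma_i=\tau+h\circ\pi_i$, a bounded continuous field. Apply $\cK_q$ using linearity. The normalization \eqref{eq:kernel-normalization} makes $\cK_q$ fix the constant field $\tau$. The product-kernel identity stated at the end of \cref{sec:prelim-kernels} gives $\cK_q(h\circ\pi_i)=\lambda_2(q)\,h(x_i)$, since $h\in\cH_2^D\otimes\Herm(B)$. Adding the two contributions yields \eqref{eq:kernel-on-sigma}.

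The only step with genuine content is the identification of the spherical mean $\int uu^\dagger\,d\mu_D=I_A/d_A$, which is where harmonicity comes from. I would present it through the invariance argument, with the explicit moment computation as a check.
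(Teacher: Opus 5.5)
Your proof is correct and follows the paper's argument: you compute the spherical mean $\int uu^\dagger\,d\mu_D=I_A/d_A$, use that a quadratic form has zero Laplacian exactly when its spherical mean vanishes (the paper phrases this as subtracting $\norm{v}_2^2$ times the mean to obtain a traceless $S'$), then restrict to $\cX$ and apply the blockwise Funk--Hecke action of $\cK_q$. Your $U(d_A)$-invariance derivation of the mean is a clean alternative to the paper's explicit moment computation, which you also include as a check.
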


\begin{proof}
Write $v=(a,b)$ as in \eqref{eq:realification-map}. For $v\sim\mu_D$, the coordinate moments satisfy $\E[v_pv_r]=\delta_{pr}/D$. Thus $\E[a_ja_\ell]=\E[b_jb_\ell]=\delta_{j\ell}/D$ and $\E[a_jb_\ell]=0$. Since $(u(v)u(v)^\dagger)_{j\ell}=a_ja_\ell+b_jb_\ell+\mathrm{i}(b_ja_\ell-a_jb_\ell)$, it follows that
\[
\int_{\sphere^{D-1}}u(v)u(v)^\dagger\,d\mu_D(v)=\frac{2}{D}I_A=\frac{I_A}{d_A}.
\]
Applying $\Phi$ gives $\int_{\sphere^{D-1}}\widetilde\sigma(v)\,d\mu_D(v)=\tau$.

Every real or imaginary part of an entry of $\widetilde\sigma$ is a quadratic form $v^\top Sv$ with $S\in\R^{D\times D}$ symmetric. Its spherical mean is $\Tr(S)/D$, so subtracting $\norm{v}_2^2$ times this mean gives a quadratic form $v^\top S'v$ with $\Tr(S')=0$. Its Laplacian is $2\Tr(S')=0$. Applying this observation entrywise shows that $\widetilde h\in\Pharm_2(\R^D)\otimes\Herm(B)$. Restricting \eqref{eq:ambient-h-definition} to the sphere gives \eqref{eq:sigma-harmonic-decomposition}, and composing with $\pi_i$ gives \eqref{eq:ambient-sigma-harmonic-decomposition}.

Finally, $\cK_q$ fixes the constant field $\tau$ and multiplies the degree-two harmonic in the active block by $\lambda_2(q)$. Each inactive block contributes the factor $\lambda_0(q)=1$, which proves \eqref{eq:kernel-on-sigma}.
\end{proof}

\medskip
\noindent\textbf{Kernel--Gram representations.} The next lemma describes how the kernel produces polynomial and SOS certificates from measurable fields. For bounded inputs, the integral formula \eqref{eq:product-kernel} also makes sense for $x\in\R^{kD}$ and defines an ambient polynomial. When discussing polynomial degrees, identities, or SOS membership, $\cK_qH$ denotes this ambient polynomial; its restriction to $\cX$ agrees with the operator defined in \cref{sec:prelim-kernels}.

\begin{lemma}
\label{lem:positive-kernel}
Let $q$ be a real polynomial with $\deg q\le L$.
\begin{enumerate}[label=(\roman*)]
\item If $H:\cX\to\Herm(\C^s)$ is bounded and Borel measurable, then $\cK_qH$ is a Hermitian matrix polynomial with $\bdeg(\cK_qH)\le2L$.
\item If, in addition, $H(x)\succeq0$ for all $x\in\cX$, then
\begin{equation}
\cK_qH\in\Sigma_L^{\Herm,s}(\cX).
\label{eq:kernel-matrix-SOS}
\end{equation}
\item If $f:\cX\to\R$ is bounded and Borel measurable, then $\cK_qf$ is a real polynomial with $\bdeg(\cK_qf)\le2L$. If moreover $f\ge0$ on $\cX$, then $\cK_qf\in\Sigma_L(\cX)$.
\end{enumerate}
\end{lemma}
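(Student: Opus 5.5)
The idea is to expand the kernel $\prod_j q(\ip{x_j}{y_j})^2$ as a Gram form in the monomial vector $z_L(x)$ and then integrate against the measurable field. Fix $y\in\cX$. Since $\deg q\le L$, the polynomial $x_j\mapsto q(\ip{x_j}{y_j})$ has degree at most $L$ in $x_j$ alone. Hence the product $\prod_{j=1}^k q(\ip{x_j}{y_j})$ is a real polynomial in $x$ of block degree at most $L$. We can therefore write
\[
\prod_{j=1}^k q(\ip{x_j}{y_j})=c(y)^\top z_L(x),
\]
where $c(y)\in\R^{N_{D,L,k}}$. Each coordinate of $c(y)$ is a polynomial in $y$, so $c$ is continuous and bounded on the compact set $\cX$. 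Squaring gives the pointwise identity
\[
\prod_j q(\ip{x_j}{y_j})^2=z_L(x)^\top c(y)c(y)^\top z_L(x).
\]
This identity holds for all $x\in\R^{kD}$, not only on $\cX$.

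For part (i), substitute this identity into \eqref{eq:product-kernel} and apply it entrywise. This gives
\[
(\cK_qH)(x)=\int_\cX \bigl(z_L(x)^\top c(y)c(y)^\top z_L(x)\bigr)H(y)\,d\mu_D^{\otimes k}(y).
\]
Because $H$ and $c$ are bounded and measurable, each coefficient integral $\int c_\alpha(y)c_\beta(y)H(y)\,d\mu$ is a finite Hermitian matrix. Therefore $\cK_qH$ is a finite linear combination of products $z_{L,\alpha}(x)z_{L,\beta}(x)$ with Hermitian matrix coefficients. Each such product has block degree at most $2L$, so $\bdeg(\cK_qH)\le 2L$. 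Hermiticity holds because the kernel is real and $H(y)$ is Hermitian.

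For part (ii), rewrite the same expression in the Kronecker Gram form of \eqref{eq:Gram-representation}:
\[
\cK_qH=(z_L^\top\otimes I_s)\,Q\,(z_L\otimes I_s),\qquad Q:=\int_\cX \bigl(c(y)c(y)^\top\bigr)\otimes H(y)\,d\mu_D^{\otimes k}(y).
\]
Checking this identity is a short index computation. Positivity of $Q$ follows pointwise. For $H(y)\succeq0$, the Kronecker product of the two positive semidefinite matrices $c(y)c(y)^\top$ and $H(y)$ is positive semidefinite. The integral of a bounded measurable PSD-valued function is PSD, since $w^\dagger Qw=\int w^\dagger(\cdots)w\,d\mu\ge0$ for every vector $w$.

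The Gram characterization preceding \cref{prop:finite-SDP}, which rests on \cref{lem:realification}, then gives membership in $\Sigma_L^{\Herm,s}(\cX)$. One can also avoid that characterization by factoring $Q=RR^\dagger$. Then $(z_L^\top\otimes I_s)R$ is a matrix polynomial of block degree at most $L$, so the identity holds exactly, even without reducing modulo the ideal.

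Part (iii) is the case $s=1$ with real-valued $f$. Here $Q_0=\int c(y)c(y)^\top f(y)\,d\mu$ is real symmetric. It is positive semidefinite when $f\ge0$. Writing $Q_0=\sum_r w_rw_r^\top$ gives $\cK_qf=\sum_r(w_r^\top z_L)^2$, a real sum of squares with block degree $\le L$ in each factor.

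The only step needing care is measurability and integrability, which is where I expect any subtlety. Boundedness of $H$ and continuity of $c$ make every entry of the integrand bounded and Borel. Hence all the integrals are finite, and interchanging the finite sums with the integral is justified by linearity.
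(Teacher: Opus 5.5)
Your proposal is correct and follows essentially the same route as the paper. Both arguments expand $\prod_j q(\ip{x_j}{y_j})=c(y)^\top z_L(x)$ with $c$ bounded and continuous, and integrate $c(y)c(y)^\top\otimes H(y)$ to obtain an exact ambient Gram identity, from which polynomiality, the block-degree bound $2L$, and SOS membership follow. The only cosmetic difference is that you obtain $Q\succeq0$ from positivity of Kronecker products of positive semidefinite matrices, whereas the paper computes the quadratic form directly with $w_\xi(y)=(c(y)^\top\otimes I_s)\xi$.
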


\begin{proof}
For $y\in\cX$, define $a_y(x):=\prod_{j=1}^kq(\ip{x_j}{y_j})$. This is a real polynomial in $x$ of block degree at most $L$, with coefficients polynomial in $y$. Hence $a_y(x)=c(y)^\top z_L(x)$ for a bounded continuous map $c:\cX\to\R^{N_{D,L,k}}$. The kernel in \eqref{eq:product-kernel} equals $a_y(x)^2$, so
\begin{equation}
\begin{aligned}
(\cK_qH)(x)&=(z_L(x)^\top\otimes I_s)\,Q_H\,(z_L(x)\otimes I_s),\\
Q_H&:=\int_{\cX}c(y)c(y)^\top\otimes H(y)\,d\mu_D^{\otimes k}(y).
\end{aligned}
\label{eq:kernel-finite-Gram}
\end{equation}
The integral converges absolutely because $c$ and $H$ are bounded. Moreover, $Q_H\in\Herm(\C^{N_{D,L,k}s})$ because $c$ is real valued and $H$ is Hermitian valued. Thus \eqref{eq:kernel-finite-Gram} is an exact ambient polynomial identity that exhibits $\cK_qH$ as a Hermitian matrix polynomial of block degree at most $2L$, proving (i).

For (ii), take $\xi\in\C^{N_{D,L,k}s}$ and set $w_\xi(y):=(c(y)^\top\otimes I_s)\xi$. If $H(y)\succeq0$ for every $y\in\cX$, then
\[
\xi^\dagger Q_H\xi=\int_{\cX}w_\xi(y)^\dagger H(y)w_\xi(y)\,d\mu_D^{\otimes k}(y)\ge0.
\]
Consequently, $Q_H\succeq0$, and \eqref{eq:kernel-finite-Gram} is a Gram representation of the form \eqref{eq:Gram-representation}, with exact equality of ambient polynomials. This proves \eqref{eq:kernel-matrix-SOS}. The same calculation with a real scalar input gives a real symmetric Gram matrix for $\cK_qf$, proving (iii).
\end{proof}

Identity \eqref{eq:kernel-finite-Gram} is the standard kernel--Gram construction underlying squared-kernel arguments on the sphere \cite{FangFawzi2021,Magron2026}. The hypotheses in \cref{lem:positive-kernel} distinguish its two consequences: boundedness and measurability suffice for polynomiality, while pointwise positivity additionally gives an SOS certificate. Neither conclusion requires the input field to be polynomial or continuous, and the SOS conclusion requires no strict positivity margin. Since the integral itself is a finite positive semidefinite Gram matrix, no closure argument for an infinite SOS cone is needed. In particular, the lemma applies to the optimal dual field of~\cref{lem:measurable-selector}, which need not be polynomial.

\medskip
\noindent\textbf{A feasible SOS certificate.} Write $P:=\Psucc(\Phi,k)$. We apply $\cK_q$ to the optimal dual field $Y_*$, the positive matrix slacks $Y_*-\sigma_i/k$, and the nonnegative scalar slack $P-s$. Its scalar and matrix-valued actions are compatible with the trace by \eqref{eq:kernel-trace-compatibility}. By \eqref{eq:kernel-on-sigma}, smoothing preserves the common mean $\tau$ but multiplies each degree-two component by $\lambda_2(q)$. The following affine correction compensates for this change while preserving the SOS level.

\begin{proposition}
\label{prop:channel-certificate}
Let $q$ be a normalized real polynomial with $\deg q\le L$ and $\lambda:=\lambda_2(q)>0$. For this fixed choice of $q$, define
\begin{equation}
\hat{Y}_L:=\frac1\lambda\cK_qY_*-\frac{1-\lambda}{\lambda k}\tau,\qquad \hat{\gamma}_L:=\frac{P}{\lambda}-\frac{1-\lambda}{\lambda k}.
\label{eq:channel-certificate}
\end{equation}
Then $\hat{Y}_L$ is an ambient Hermitian matrix polynomial with $\bdeg(\hat{Y}_L)\le2L$, and
\begin{equation}
\hat{Y}_L-\frac1k\widetilde\sigma_i\in\Sigma_L^{\Herm,d_B}(\cX),\qquad i\in[k],
\label{eq:channel-matrix-certificate}
\end{equation}
together with
\begin{equation}
\hat{\gamma}_L-\Tr\hat{Y}_L\in\Sigma_L(\cX).
\label{eq:channel-scalar-certificate}
\end{equation}
Consequently, $(\hat{\gamma}_L,\hat{Y}_L)$ is feasible for \eqref{eq:UL-definition}. Writing $\rho:=\lambda^{-1}-1$, its objective value is
\begin{equation}
\hat{\gamma}_L=P+\rho\left(P-\frac1k\right).
\label{eq:channel-certificate-value}
\end{equation}
\end{proposition}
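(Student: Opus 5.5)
The plan is to verify both certificates by computing on $\cX$ and then transferring to congruences of ambient polynomials. Throughout I use two facts. First, by \cref{lem:real-radical} and its entrywise complex matrix analogue from \cref{sec:prelim-sos}, two ambient (matrix) polynomials agreeing on $\cX$ are congruent modulo $\cI_{\cX}$, respectively modulo $\cI_{\cX}^{\C}$. Second, the cones $\Sigma_L(\cX)$ and $\Sigma_L^{\Herm,s}(\cX)$ are closed under congruence and under multiplication by positive scalars, which is immediate from \cref{def:SOS-cones}.

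\textbf{Regularity and degree.} By \cref{lem:measurable-selector}, $Y_*$ is bounded and Borel measurable. So \cref{lem:positive-kernel}(i) shows that $\cK_qY_*$ is an ambient Hermitian matrix polynomial with $\bdeg\le2L$. Since $\tau$ is a constant Hermitian matrix, $\hat Y_L$ is Hermitian with $\bdeg(\hat Y_L)\le 2L$.

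\textbf{Matrix certificates.} Fix $i\in[k]$ and set $H_i:=Y_*-\sigma_i/k$. It is bounded, measurable, and pointwise positive semidefinite on $\cX$ by \eqref{eq:Ystar-properties}. \Cref{lem:positive-kernel}(ii) then gives $\cK_qH_i\in\Sigma_L^{\Herm,d_B}(\cX)$, and hence $\lambda^{-1}\cK_qH_i$ also lies in this cone because $\lambda>0$. Next I compute on $\cX$ using linearity of $\cK_q$ and \eqref{eq:kernel-on-sigma}:
\[
\lambda^{-1}\cK_qH_i=\lambda^{-1}\cK_qY_*-\frac{\tau+\lambda h(x_i)}{\lambda k}
=\hat Y_L+\frac{(1-\lambda)\tau}{\lambda k}-\frac{\tau}{\lambda k}-\frac{h(x_i)}{k}
=\hat Y_L-\frac{\tau+h(x_i)}{k}.
\]
By \eqref{eq:sigma-harmonic-decomposition}, the right-hand side equals $(\hat Y_L-\widetilde\sigma_i/k)|_{\cX}$. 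Thus the two ambient polynomials $\lambda^{-1}\cK_qH_i$ and $\hat Y_L-\widetilde\sigma_i/k$ agree on $\cX$, so they are congruent modulo $\cI_{\cX}^{\C}$. This gives \eqref{eq:channel-matrix-certificate}.

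\textbf{Scalar certificate.} The function $P-s$ is bounded, measurable (indeed continuous), and nonnegative on $\cX$, since $P=\max s$. So \cref{lem:positive-kernel}(iii) gives $\lambda^{-1}\cK_q(P-s)\in\Sigma_L(\cX)$. On $\cX$, three facts hold: \eqref{eq:kernel-trace-compatibility} together with $\Tr Y_*=s$ gives $\Tr\cK_qY_*=\cK_qs$; normalization gives $\cK_qP=P$; and $\Tr\tau=1$. Using these,
\[
\hat\gamma_L-\Tr\hat Y_L=\frac P\lambda-\frac{1-\lambda}{\lambda k}-\frac{\cK_qs}{\lambda}+\frac{1-\lambda}{\lambda k}=\lambda^{-1}\cK_q(P-s)\quad\text{on }\cX.
\]
Congruence modulo $\cI_{\cX}$ then yields \eqref{eq:channel-scalar-certificate}. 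Feasibility in \eqref{eq:UL-definition} follows from the two certificates and the degree bound. Finally, \eqref{eq:channel-certificate-value} is the algebraic identity
\[
\frac P\lambda-\frac{1-\lambda}{\lambda k}=P+(\lambda^{-1}-1)P-(\lambda^{-1}-1)\frac1k .
\]

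\textbf{Main obstacle.} The delicate point is the distinction between ambient polynomials and their restrictions. The integral formula defines $\cK_qY_*$, $\cK_qH_i$, and even $\cK_qP$ as ambient polynomials that are not the "obvious" ones off $\cX$; for example, the ambient $\cK_qP$ is not constant. Likewise, $\widetilde\sigma_i$ differs from $\tau+\widetilde h(x_i)$ off the sphere. I therefore do every identity only on $\cX$ and pass to SOS membership solely through the radical property (\cref{lem:real-radical}) and the congruence-invariance of the cones. I expect this step to require the most care in the write-up, particularly the entrywise complex-coefficient version of the congruence argument for the matrix constraint.
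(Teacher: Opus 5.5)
Your proposal is correct and follows the paper's proof essentially step for step. You verify on $\cX$ that the corrected slacks equal $\lambda^{-1}\cK_q(Y_*-\sigma_i/k)$ and $\lambda^{-1}\cK_q(P-s)$, obtain SOS membership from the kernel--Gram lemma, and pass to ambient congruences via the vanishing-ideal lemma. You also rightly note that the ambient polynomials $\cK_qP$ and $\widetilde\sigma_i$ differ from their naive forms off $\cX$, so every identity is checked only on $\cX$, which is the same care the paper's argument takes.
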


\begin{proof}
By \cref{lem:positive-kernel}(i), $\cK_qY_*$ is an ambient Hermitian matrix polynomial of block degree at most $2L$. The same therefore holds for $\hat{Y}_L$.

Fix $i\in[k]$. On $\cX$, equations \eqref{eq:sigma-harmonic-decomposition} and \eqref{eq:kernel-on-sigma} give
\[
\frac1\lambda\cK_q\!\left(\frac{\sigma_i}{k}\right)=\frac{\tau}{\lambda k}+\frac{h\circ\pi_i}{k}.
\]
Substituting this identity into \eqref{eq:channel-certificate} yields
\begin{equation}
\hat{Y}_L-\frac{\sigma_i}{k}=\frac1\lambda\cK_q\!\left(Y_*-\frac{\sigma_i}{k}\right)\qquad\text{on }\cX.
\label{eq:channel-slack-on-X}
\end{equation}
By \eqref{eq:Ystar-properties}, the field $Y_*-\sigma_i/k$ is bounded, Borel measurable, and pointwise positive semidefinite. Since $\lambda>0$, \cref{lem:positive-kernel}(ii) implies
\[
\frac1\lambda\cK_q\!\left(Y_*-\frac{\sigma_i}{k}\right)\in\Sigma_L^{\Herm,d_B}(\cX).
\]
The left-hand side of \eqref{eq:channel-slack-on-X} is the restriction of $\hat{Y}_L-\widetilde\sigma_i/k$, and the right-hand side is the restriction of the ambient kernel polynomial. Their difference vanishes on $\cX$, so \cref{lem:real-radical} gives
\begin{equation}
\hat{Y}_L-\frac1k\widetilde\sigma_i\equiv\frac1\lambda\cK_q\!\left(Y_*-\frac{\sigma_i}{k}\right)\pmod{\cI_{\cX}^{\C}}.
\label{eq:channel-slack-congruence}
\end{equation}
This proves \eqref{eq:channel-matrix-certificate}.

For the scalar constraint, trace compatibility gives $\Tr(\cK_qY_*)=\cK_q(\Tr Y_*)=\cK_qs$. Since $\Tr\tau=1$, equation \eqref{eq:channel-certificate} implies $\Tr\hat{Y}_L=\lambda^{-1}\cK_qs-(1-\lambda)/(\lambda k)$. Using $\cK_q1=1$ on $\cX$, we obtain
\begin{equation}
\hat{\gamma}_L-\Tr\hat{Y}_L=\frac1\lambda\cK_q(P-s)\qquad\text{on }\cX.
\label{eq:channel-objective-on-X}
\end{equation}
The field $P-s$ is bounded, Borel measurable, and nonnegative because $P=\max_{x\in\cX}s(x)$. Hence \cref{lem:positive-kernel}(iii) places the right-hand side in $\Sigma_L(\cX)$. Applying \cref{lem:real-radical} to \eqref{eq:channel-objective-on-X} proves \eqref{eq:channel-scalar-certificate}. Finally,
\[
\frac{P}{\lambda}-\frac{1-\lambda}{\lambda k}=P+\rho\left(P-\frac1k\right),
\]
which proves \eqref{eq:channel-certificate-value}.
\end{proof}

The scaling by $\lambda^{-1}$ restores the degree-two components of the channel constraints, while subtracting $(1-\lambda)\tau/(\lambda k)$ restores their common constant component. The construction therefore obtains finite Gram certificates directly from the positive slack fields $Y_*-\sigma_i/k$ and $P-s$, using the same kernel with entrywise action in the matrix-valued case. It requires no norm approximation of the generally nonsmooth selector $Y_*$.

\medskip
\noindent\textbf{Main convergence.} Choosing a normalized kernel polynomial that maximizes its degree-two eigenvalue gives the following bounds.

\begin{theorem}
\label{thm:main-convergence}
Let $\Phi:\cL(\C^{d_A})\to\cL(\C^{d_B})$ be a quantum channel and let $k\ge2$. For every $L\ge1$, we have
\begin{equation}
\Psucc(\Phi,k)\le U_L(\Phi,k)\le\min\!\left\{1,\ \Psucc(\Phi,k)+\rho_{2d_A,L}\left(\Psucc(\Phi,k)-\frac1k\right)\right\}.
\label{eq:main-convergence-bound}
\end{equation}
Consequently,
\begin{equation}
0\le U_L(\Phi,k)-\Psucc(\Phi,k)\le\frac{\rho_{2d_A,L}}{1+\rho_{2d_A,L}}\left(1-\frac1k\right).
\label{eq:uniform-main-gap}
\end{equation}
Moreover, with the absolute constant $c_1$ from \cref{prop:kernel-constant}, for every $L\ge 1$, we have
\begin{equation}
U_L(\Phi,k)-\Psucc(\Phi,k)\le4c_1\frac{d_A^2}{L^2}\left(\Psucc(\Phi,k)-\frac1k\right)\le4c_1\left(1-\frac1k\right)\frac{d_A^2}{L^2}.
\label{eq:explicit-relative-rate}
\end{equation}
\end{theorem}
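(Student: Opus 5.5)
The plan is to combine \cref{prop:basic-properties} with \cref{prop:channel-certificate}, using the kernel polynomial that attains the maximum in \eqref{eq:lambda-rho-definition}. Set $D=2d_A$ and $P:=\Psucc(\Phi,k)$. The lower bound $P\le U_L(\Phi,k)$ and the bound $U_L(\Phi,k)\le1$ are already contained in \cref{prop:basic-properties}.

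For the second term in the minimum of \eqref{eq:main-convergence-bound}, choose a normalized $q$ with $\deg q\le L$ and $\lambda_2(q)=\lambda_{D,L}$. The maximum is attained because it is the top eigenvalue of $T_{D,L}$. By \cref{prop:kernel-constant}, $\lambda_{D,L}\ge 2L/(D+2L)>0$, so \cref{prop:channel-certificate} applies. It gives a feasible pair $(\hat\gamma_L,\hat Y_L)$ with objective $P+\rho_{D,L}(P-1/k)$. Hence $U_L\le\hat\gamma_L$, and together with $U_L\le1$ this proves \eqref{eq:main-convergence-bound}.

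For \eqref{eq:uniform-main-gap}, write $\rho=\rho_{D,L}\ge0$ and bound the gap by
\[
 g(P):=\min\{1-P,\ \rho(P-1/k)\},\qquad P\in[1/k,1].
\]
The first term decreases in $P$ and the second increases, so the maximum of $g$ occurs where they are equal. This happens at $P^\star=(1+\rho/k)/(1+\rho)$. There the common value is
\[
 1-P^\star=\frac{\rho-\rho/k}{1+\rho}=\frac{\rho}{1+\rho}\Bigl(1-\frac1k\Bigr),
\]
which gives \eqref{eq:uniform-main-gap}.

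Finally, \eqref{eq:explicit-relative-rate} follows from \eqref{eq:FF-rate} with $D=2d_A$, which gives $\rho_{2d_A,L}\le c_1(2d_A/L)^2=4c_1d_A^2/L^2$. Insert this into the second term of \eqref{eq:main-convergence-bound} and use $P-1/k\le1-1/k$ from \eqref{eq:operational-bounds}.

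There is no real obstacle here: all the substantive work lies in \cref{prop:channel-certificate} and \cref{prop:kernel-constant}. The points to check are that $\lambda>0$, so the affine correction is well defined, and the monotonicity argument in the elementary optimization over $P$.
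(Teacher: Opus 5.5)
Your proposal is correct and follows essentially the same route as the paper: plug the maximizing normalized kernel polynomial into \cref{prop:channel-certificate} and combine with \cref{prop:basic-properties}, compute the crossing point of $\min\{1-P,\rho(P-1/k)\}$ on $[1/k,1]$, and finally apply \eqref{eq:FF-rate} with $D=2d_A$. Your explicit monotonicity argument for why the crossing point maximizes the minimum is a small, welcome addition to the paper's terser wording.
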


\begin{proof}
Write $P=\Psucc(\Phi,k)$ and choose a normalized polynomial $q_L$ attaining $\lambda_{2d_A,L}$ in \eqref{eq:lambda-rho-definition}. By \cref{prop:kernel-constant}, $\lambda_{2d_A,L}>0$. Applying \cref{prop:channel-certificate} with $q=q_L$ gives a feasible pair $(\hat{\gamma}_L,\hat{Y}_L)$ satisfying
\[
U_L(\Phi,k)\le\hat{\gamma}_L=P+\rho_{2d_A,L}\left(P-\frac1k\right).
\]
Combining this inequality with $P\le U_L(\Phi,k)\le1$ from \cref{prop:basic-properties} proves \eqref{eq:main-convergence-bound}.

For \eqref{eq:uniform-main-gap}, set $\rho=\rho_{2d_A,L}$. The gap is at most $\min\{1-P,\rho(P-1/k)\}$. Over $P\in[1/k,1]$, the two affine functions intersect at $P_*=(1+\rho/k)/(1+\rho)$, where their common value is $\rho(1-1/k)/(1+\rho)$. This value bounds their minimum for every $P$ in the interval.

Finally, \cref{prop:kernel-constant} with $D=2d_A$ gives $\rho_{2d_A,L}\le c_1(2d_A/L)^2=4c_1d_A^2/L^2$ for every $L\ge 1$. Substitution into \eqref{eq:main-convergence-bound}, together with $P\le1$, yields \eqref{eq:explicit-relative-rate}.
\end{proof}

The error in \eqref{eq:main-convergence-bound} is relative to the advantage over random guessing, $\Psucc(\Phi,k)-1/k$. This dependence follows directly from the baseline $\Tr(\tau/k)=1/k$ in the affine correction. The error bound also has no dependence on the output dimension $d_B$: the kernel acts on the input parameters in $\sphere^{2d_A-1}$, and \cref{lem:positive-kernel} preserves positive semidefiniteness without a loss depending on matrix size. This is the same matrix-size independence as in the matrix-valued sphere result of \cite{FangFawzi2021}. 
%The output dimension enters the SDP size, as discussed in \cref{sec:complexity}.

\begin{corollary}
\label{cor:P-endpoints}
If $\Psucc(\Phi,k)\in\{1/k,1\}$, then $U_L(\Phi,k)=\Psucc(\Phi,k)$ for every $L\ge1$.
\end{corollary}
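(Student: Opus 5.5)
The plan is to read both endpoint cases directly off the sandwich \eqref{eq:main-convergence-bound} in \cref{thm:main-convergence}. For every $L\ge1$ that bound gives
\[
\Psucc(\Phi,k)\le U_L(\Phi,k)\le\min\!\left\{1,\ \Psucc(\Phi,k)+\rho_{2d_A,L}\left(\Psucc(\Phi,k)-\frac1k\right)\right\}.
\]
The argument then splits into two cases according to which endpoint $\Psucc(\Phi,k)$ equals.

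\emph{Case $\Psucc(\Phi,k)=1/k$.} The advantage term $\Psucc(\Phi,k)-1/k$ vanishes. The second entry of the minimum therefore equals $\Psucc(\Phi,k)$, and this holds whatever the finite value of $\rho_{2d_A,L}$. Finiteness of $\rho_{2d_A,L}$ is exactly what \cref{prop:kernel-constant} supplies: it gives $\lambda_{2d_A,L}>0$, and this positivity is also what makes \cref{prop:channel-certificate} applicable at every level. So the upper and lower bounds coincide and $U_L(\Phi,k)=1/k$.

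\emph{Case $\Psucc(\Phi,k)=1$.} Here the first entry of the minimum, which is $1$ and comes from the feasible pair $(1,Y_{\mathrm{av}})$ of \cref{prop:basic-properties}, already matches the lower bound $\Psucc(\Phi,k)=1$. This forces $U_L(\Phi,k)=1$.

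No genuine obstacle arises. The only point worth checking is that the first case does not depend on the quantitative constant $c_1$, only on $\lambda_{2d_A,L}>0$, so exactness holds at every level $L\ge1$ and not just for large $L$. Optionally, one can note what the certificate looks like in that case. When $P=1/k$ we have $\hat\gamma_L=1/k$, and since $s\ge 1/k$ pointwise while $P=\max_{x\in\cX} s(x)=1/k$, the function $s$ is constant, so the scalar slack $\lambda^{-1}\cK_q(P-s)$ vanishes identically. In this case the kernel certificate is exact without any correction error.
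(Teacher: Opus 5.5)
Your proposal is correct and matches the paper's proof: both cases are read directly off the sandwich \eqref{eq:main-convergence-bound}, with the relative term vanishing when $\Psucc(\Phi,k)=1/k$ and the upper bound $1$ meeting the lower bound when $\Psucc(\Phi,k)=1$. Your side remark is also correct but not needed: when $P=1/k$ one has $s\equiv 1/k$ on $\cX$, so the scalar slack $\lambda^{-1}\cK_q(P-s)$ vanishes identically.
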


\begin{proof}
If $\Psucc(\Phi,k)=1/k$, the relative term in \eqref{eq:main-convergence-bound} vanishes. If $\Psucc(\Phi,k)=1$, its lower and upper bounds coincide.
\end{proof}

%%%%%%%%%%%%%%%%%%%%%%%%%%%%%%%%%%5

\subsection{Binary messages and trace-norm contraction}
\label{sec:binary}

For binary messages, the identity \eqref{eq:P-eta} converts the bounds on $\Psucc(\Phi,2)$ into multiplicative bounds on the trace-norm contraction coefficient. Define
\begin{equation}
E_L(\Phi):=2U_L(\Phi,2)-1.
\label{eq:EL-definition}
\end{equation}

\begin{theorem}
\label{thm:contraction-hierarchy}
For every $L\ge1$, we have
\begin{equation}
\etatr(\Phi)\le E_L(\Phi)\le\min\!\left\{1,\ (1+\rho_{2d_A,L})\etatr(\Phi)\right\}.
\label{eq:eta-multiplicative}
\end{equation}
Moreover, with the absolute constant $c_1$ from \cref{prop:kernel-constant}, for every $L\ge 1$, one has
\begin{equation}
E_L(\Phi)-\etatr(\Phi)\le4c_1\frac{d_A^2}{L^2}\etatr(\Phi).
\label{eq:eta-explicit-rate}
\end{equation}
If $\etatr(\Phi)\in\{0,1\}$, then $E_L(\Phi)=\etatr(\Phi)$ for every $L\ge1$.
\end{theorem}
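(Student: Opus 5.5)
The plan is to specialize \cref{thm:main-convergence} to $k=2$ and translate it through the identity \eqref{eq:P-eta}, $\Psucc(\Phi,2)=(1+\etatr(\Phi))/2$. The key algebraic observation is that the advantage over random guessing becomes
\[
\Psucc(\Phi,2)-\tfrac12=\tfrac12\etatr(\Phi),
\]
so the relative error term in \eqref{eq:main-convergence-bound} is exactly proportional to $\etatr(\Phi)$. The map $U\mapsto 2U-1$ is affine and increasing, so it preserves all the inequalities.

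\textbf{Step 1.} Write $P=\Psucc(\Phi,2)$ and $\rho=\rho_{2d_A,L}$. \Cref{thm:main-convergence} with $k=2$ gives
\[
P\le U_L(\Phi,2)\le\min\{1,\ P+\rho(P-\tfrac12)\}.
\]
Multiplying by $2$ and subtracting $1$ gives
\[
2P-1\le E_L(\Phi)\le\min\{1,\ 2P-1+\rho(2P-1)\}.
\]
Since $2P-1=\etatr(\Phi)$ by \eqref{eq:P-eta}, the upper bound is $\min\{1,(1+\rho)\etatr(\Phi)\}$. This proves \eqref{eq:eta-multiplicative}. The case $d_A=1$ needs no separate treatment, because \eqref{eq:P-eta} holds there as well, as noted after it.

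\textbf{Step 2.} For \eqref{eq:eta-explicit-rate}, I would subtract $\etatr(\Phi)$ from \eqref{eq:eta-multiplicative} to obtain $E_L(\Phi)-\etatr(\Phi)\le\rho\,\etatr(\Phi)$. Then I insert the bound $\rho_{2d_A,L}\le c_1(2d_A/L)^2=4c_1d_A^2/L^2$ from \cref{prop:kernel-constant}. Equivalently, one can double \eqref{eq:explicit-relative-rate} with $k=2$.

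\textbf{Step 3 (endpoints).} If $\etatr(\Phi)=0$, then \eqref{eq:eta-multiplicative} squeezes $E_L(\Phi)$ between $0$ and $0$. If $\etatr(\Phi)=1$, the lower bound $1$ meets the upper bound $\min\{1,\cdot\}\le1$. Alternatively, this follows from \cref{cor:P-endpoints}, since $P\in\{1/2,1\}$ exactly in these cases.

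No step is a real obstacle, since everything reduces to bookkeeping with the affine map. The only points to check are that the affine map preserves the minimum with $1$ (it does, since $2\cdot1-1=1$) and that \eqref{eq:P-eta} is applicable in every dimension.
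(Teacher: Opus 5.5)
Your proposal is correct and follows essentially the same route as the paper: you specialize \cref{thm:main-convergence} to $k=2$, rewrite $\Psucc(\Phi,2)-\tfrac12=\tfrac12\etatr(\Phi)$ via \eqref{eq:P-eta} (valid also for $d_A=1$), and push the bounds through the increasing affine map $U\mapsto 2U-1$, which fixes $1$. Your direct squeeze argument at the endpoints is an equally valid alternative to invoking \cref{cor:P-endpoints}, which the paper uses.
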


\begin{proof}
By \eqref{eq:P-eta}, $\Psucc(\Phi,2)-1/2=\etatr(\Phi)/2$, including the case $d_A=1$ under the convention following \eqref{eq:etatr-def}. Hence $E_L(\Phi)-\etatr(\Phi)=2(U_L(\Phi,2)-\Psucc(\Phi,2))$. Applying \cref{thm:main-convergence} with $k=2$ gives \eqref{eq:eta-multiplicative} and \eqref{eq:eta-explicit-rate}, with $U_L(\Phi,2)\le1$ yielding $E_L(\Phi)\le1$. The endpoint statement follows from \cref{cor:P-endpoints}, since $\etatr(\Phi)=0$ and $\etatr(\Phi)=1$ correspond to $\Psucc(\Phi,2)=1/2$ and $\Psucc(\Phi,2)=1$, respectively.
\end{proof}

% When the upper bound of one is inactive, \eqref{eq:eta-multiplicative} gives a multiplicative approximation with factor $1+\rho_{2d_A,L}$. This is particularly useful in the strong-contraction regime, where an additive error independent of $\etatr(\Phi)$ may be large relative to the quantity being estimated. For example, when $A=B$, the trace distance after $n$ successive applications of $\Phi$ is bounded by $\etatr(\Phi)^n$ times the initial trace distance. The bound \eqref{eq:eta-multiplicative} controls the relative error uniformly over positive values of $\etatr(\Phi)$ and is exact when $\etatr(\Phi)=0$.
The bound \eqref{eq:eta-multiplicative} gives a multiplicative approximation with factor $1+\rho_{2d_A,L}$. This is particularly useful in the strong-contraction regime, where an additive error independent of $\etatr(\Phi)$ may be large relative to the quantity being estimated. For example, when $A=B$, the trace distance after $n$ successive applications of $\Phi$ is bounded by $\etatr(\Phi)^n$ times the initial trace distance. The bound \eqref{eq:eta-multiplicative} controls the relative error uniformly over positive values of $\etatr(\Phi)$ and is exact when $\etatr(\Phi)=0$.

\section{Numerical experiments}
\label{sec:numerics}

This section compares the SOS and extension hierarchies in terms of SDP size and first-level performance. We first examine the largest PSD block orders at a common guaranteed additive error. We then compare bounds for binary messages on an exactly solvable benchmark and random qubit-to-qutrit channels, including the extension hierarchy with PPT constraints.

\subsection{SDP size at a prescribed accuracy}
\label{sec:complexity}

We compare the largest PSD block orders at levels guaranteeing an additive relaxation error of at most $\varepsilon\in(0,1)$. Let $d=\max\{d_A,d_B\}$, and let $C_{\mathrm{ext}}(d)=\poly(d)>0$ be the prefactor in the extension error bound $C_{\mathrm{ext}}(d)/\sqrt m$ from \cite{DelsolEtAl2025}. Together with \cref{thm:main-convergence}, this gives the sufficient levels
\begin{equation}
 L_\varepsilon=\left\lceil\frac{C_{\mathrm{SOS}}d_A}{\sqrt{\varepsilon}}\right\rceil,
 \qquad
 m_\varepsilon=\left\lceil\frac{C_{\mathrm{ext}}(d)^2}{\varepsilon^2}\right\rceil,
 \qquad
 C_{\mathrm{SOS}}:=2\max\{1,\sqrt{c_1}\}.
\label{eq:L-of-eps}
\end{equation}
These levels follow from a priori bounds and may exceed those needed for particular channels.

Set $N_\varepsilon:=\binom{2d_A+L_\varepsilon}{L_\varepsilon}^{k}$. The largest SOS Gram block has order $d_BN_\varepsilon$ by \cref{prop:finite-SDP}, whereas the direct extension formulation has blocks of order $d_Ad_B^{m_\varepsilon}$ \cite{DelsolEtAl2025}. Applying the commutant reduction of \cite{CheeTaVu2025} to the classical message structure, while retaining all irreducible sectors, gives the reduced bound in \cref{tab:sizes}.

\begin{table}[h]
\centering
\small
\setlength{\tabcolsep}{6pt}
\renewcommand{\arraystretch}{1.3}
\begin{tabular}{|l|c|c|c|}
\hline
& SOS & $\mathrm{SDP}_m$ & $\mathrm{SDP}_m$ (symmetry)\\
\hline
Sufficient level
  & $L_\varepsilon$
  & $m_\varepsilon$
  & $m_\varepsilon$\\
\hline
Largest PSD block order
  & $d_B N_\varepsilon$
  & $d_A d_B^{\,m_\varepsilon}$
  & $\le d_A (m_\varepsilon+1)^{k d_B(d_B-1)/2}$\\
\hline
\end{tabular}
\caption{Largest PSD block orders at guaranteed additive error $\varepsilon$.}
\label{tab:sizes}
\end{table}

For fixed $d_A,d_B,k$, the SOS block order is $\Theta(\varepsilon^{-kd_A})$, and the symmetry-reduced extension block order is $O(\varepsilon^{-kd_B(d_B-1)})$. The direct extension block order is exponential in $\varepsilon^{-2}$ when $d_B\ge2$. At fixed accuracy and fixed $k$, the SOS block order grows linearly in $d_B$ when $d_A$ is fixed, whereas the symmetry-reduced extension block order grows polynomially in $d_A$ when $d_B$ is fixed. These comparisons concern the largest PSD block and do not by themselves determine runtime.

\subsection{First-level bounds for binary channel coding}
\label{sec:numerics-binary}

We compare the first SOS level with the first extension level from \cite{DelsolEtAl2025}, with and without positive partial transpose (PPT) constraints. All values are expressed on the contraction-coefficient scale:
\[
 \begin{aligned}
 E_1(\Phi)&=2U_1(\Phi,2)-1,\\
 E_1^{\mathrm{ext}}(\Phi)&=2\mathrm{SDP}_1(\Phi,2)-1,\\
 E_1^{\mathrm{ext,PPT}}(\Phi)
 &=2\mathrm{SDP}_1^{\mathrm{PPT}}(\Phi,2)-1.
 \end{aligned}
\]
We consider an exactly solvable benchmark and $40$ random qubit-to-qutrit channels.

\paragraph{An exactly solvable benchmark.}
Consider the channel from \cite{DelsolEtAl2025} with Kraus operators
\[
 K_{ij}=\frac{\ket{i}\!\bra{j}}{\sqrt{d-1}},
 \qquad i\ne j,
\]
whose action is
\[
 \Phi_d(X)
 =\frac{\Tr(X)I_d-\operatorname{diag}(X)}{d-1}.
\]
Trace-norm contractivity of diagonal pinching gives
$\etatr(\Phi_d)\le1/(d-1)$, with equality attained by two distinct computational-basis states.

For $d=3$, take $Y=I_3/4$ and $\gamma=3/4$. These satisfy
\[
 Y-\frac12\widetilde\sigma_i
 \equiv R_iR_i^\dagger
 \pmod{\cI_{\cX}^{\C}},
 \qquad
 R_i=\frac12\operatorname{diag}(u(x_i)),
 \qquad
 \gamma-\Tr Y=0.
\]
Since $\bdeg(R_i)=1$, this is a feasible first-level certificate and gives
\[
 \etatr(\Phi_3)\le E_1(\Phi_3)\le\frac12.
\]
Both extension formulations, with and without PPT constraints, give the same bound $\etatr(\Phi_3)\le1/2$ \cite{DelsolEtAl2025}.

\paragraph{Random channels and reference values.}
We sample $20$ channels of Choi rank $6$ and $20$ of Choi rank $3$ by taking a QR factorization of a complex Ginibre matrix in $\C^{3r\times2}$ and extracting Kraus operators from the resulting isometry. The samples use NumPy's PCG64 generator with $\texttt{SeedSequence}([r,s])$, where $r\in\{3,6\}$ and $s=0,\ldots,19$. The same channels are used in all three formulations.

For a qubit input, let $S_1,S_2,S_3$ be the Pauli matrices and set $H_j=\Phi(S_j)$. Then
\begin{equation}
 \etatr(\Phi)
 =\max_{n\in\sphere^2}
 \frac12\norm{\sum_{j=1}^3n_jH_j}_1.
 \label{eq:numerics-bloch-reference}
\end{equation}
We compute a reference value $\eta_{\mathrm{ref}}$ by evaluating this objective on $8192$ spherical grid points and refining $32$ selected directions by local optimization. This procedure gives an achievable lower estimate but does not certify global optimality.

The SOS implementation uses a phase-averaged representation at $L=1$. Averaging over independent phase rotations of the input vectors preserves feasibility, the degree bounds, and the objective, so this representation has the same optimum. The extension formulations follow \cite{DelsolEtAl2025}.

All $120$ random-channel solves return \texttt{optimal\_inaccurate}. We repair the SOS Gram matrices and coefficient residuals and adjust the objective accordingly. Writing $\gamma_{\mathrm{rep}}$ for the adjusted objective, we report
\[
 \hat{E}_1=2\gamma_{\mathrm{rep}}-1.
\]
The largest reported primal--dual gap across the three methods is $5.5\times10^{-8}$. All computations and feasibility checks use floating-point arithmetic, so the repaired values are not certified by interval arithmetic.

\paragraph{Results.}
All reported gaps are measured relative to $\eta_{\mathrm{ref}}$. As summarized in \cref{tab:numerics-random}, the computed SOS gaps satisfy
\[
 0\le\hat{E}_1-\eta_{\mathrm{ref}}
 \le4.10\times10^{-7}
\]
across all $40$ channels. The mean extension gaps are $0.101499$ without PPT and $0.101377$ with PPT. The repaired SOS value is smaller than both computed extension values in every instance, with a minimum observed separation greater than $0.0164$.

\begin{table}[H]
\centering
\small
\setlength{\tabcolsep}{6pt}
\renewcommand{\arraystretch}{1.3}
\begin{tabular}{|c|c|c|c|c|}
\hline
Choi rank & Samples
 & \shortstack{Maximum gap\\$\hat{E}_1-\eta_{\mathrm{ref}}$}
 & \shortstack{Mean gap\\$E_1^{\mathrm{ext}}-\eta_{\mathrm{ref}}$}
 & \shortstack{Mean gap\\$E_1^{\mathrm{ext,PPT}}-\eta_{\mathrm{ref}}$}\\
\hline
$6$ & $20$ & $1.55\times10^{-7}$ & $0.089232$ & $0.088999$\\
$3$ & $20$ & $4.10\times10^{-7}$ & $0.113766$ & $0.113756$\\
\hline
All & $40$ & $4.10\times10^{-7}$ & $0.101499$ & $0.101377$\\
\hline
\end{tabular}
\caption{First-level gaps for random qubit-to-qutrit channels.}
\label{tab:numerics-random}
\end{table}

The individual results in \cref{fig:numerics-qubit-qutrit} show numerical agreement between the repaired SOS values and the reference values throughout the sample. The first extension level retains a visible gap, with little improvement from PPT constraints. These observations support numerical tightness of the first SOS level on the sampled channels but do not establish its exactness for all qubit-to-qutrit channels.

\begin{figure}[H]
\centering
\includegraphics[scale = 0.45]{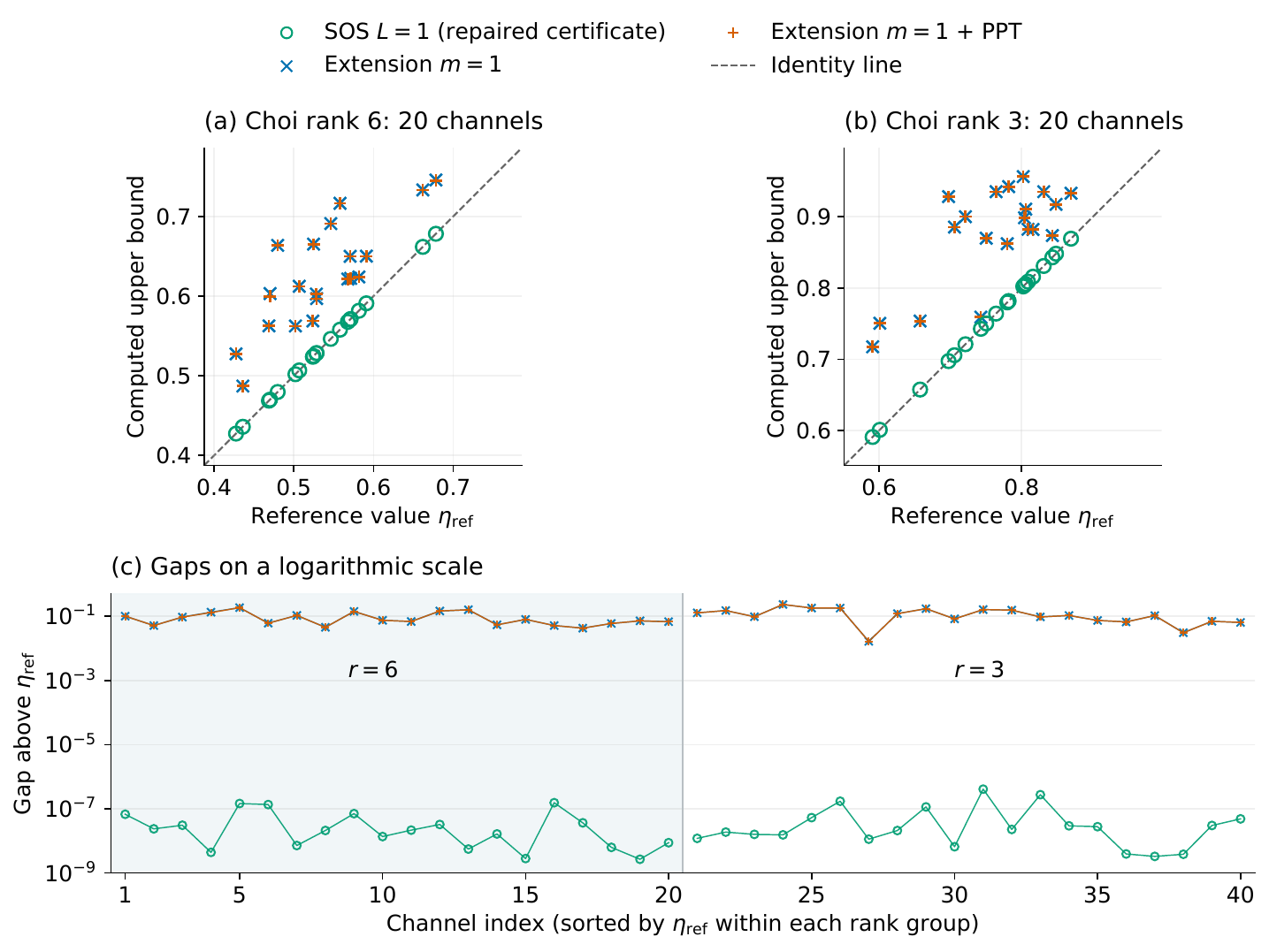}
\caption{First-level bounds for $40$ random qubit-to-qutrit channels. Panels (a) and (b) compare the computed values with $\eta_{\mathrm{ref}}$ for Choi ranks $6$ and $3$. Panel (c) shows the gaps on a logarithmic scale, with channels sorted by $\eta_{\mathrm{ref}}$ within each rank group. SOS values include numerical certificate repair.}
\label{fig:numerics-qubit-qutrit}
\end{figure}

\section*{Declaration of AI Usage}
The authors led the research direction and the development of the sum-of-squares hierarchy and its convergence analysis. AI models (ChatGPT 5.5 Pro and ChatGPT-5.6 Sol Pro) assisted with exploring mathematical arguments, drafting intermediate proof sketches, and organizing and drafting parts of the exposition. All AI-generated material was carefully checked and substantially revised by the authors, who take full responsibility for the manuscript and the validity of its results.

\section*{Acknowledgments} We would like to thank Prof.~Omar Fawzi for helpful discussions and valuable suggestions.

%\newpage

\bibliographystyle{alpha}
\bibliography{references}

%%%%%%%%%%%%%%%%%%%%%%%%%%%%%%%%%%%%%%%%%%%%%%%%%%%%%%%%%%%%%%%%%%%%%%%

\newpage
\appendix

\section{Hermitian matrix SOS cones and realification}
\label{app:HSOS}

This appendix establishes the equivalence between the matrix SOS definition in \cref{def:SOS-cones} and the Gram representation \eqref{eq:Gram-representation}. It also gives the realification used in the SDP representation of \cref{prop:finite-SDP}.

\begin{lemma}
\label{lem:realification}
Let $z=(z_1,\ldots,z_N)^\top\in\R[x]^N$ and let $s\ge1$.
\begin{enumerate}[label=(\roman*)]
\item For a Hermitian matrix polynomial $F\in\C[x]^{s\times s}$, the following statements are equivalent.
\begin{enumerate}[label=(\alph*)]
\item There exist finitely many matrix polynomials $R_r\in\C[x]^{s\times m_r}$, with $m_r\ge1$ and entries in $\operatorname{span}_{\C}\{z_1,\ldots,z_N\}$, such that
\begin{equation}
F=\sum_rR_rR_r^\dagger.
\label{eq:finite-HSOS-z}
\end{equation}
\item There exists $Q\in\Herm(\C^{Ns})$ with $Q\succeq0$ such that
\begin{equation}
F=(z^\top\otimes I_s)\,Q\,(z\otimes I_s).
\label{eq:exact-Gram-z}
\end{equation}
\end{enumerate}
In particular, taking $z=z_L$, a Hermitian matrix polynomial $F$ belongs to $\Sigma_L^{\Herm,s}(\cX)$ if and only if there exists $Q\in\Herm(\C^{N_{D,L,k}s})$, with $Q\succeq0$, satisfying \eqref{eq:Gram-representation}.

\item For $Q\in\Herm(\C^{Ns})$, write $Q=Q_{\mathrm R}+\mathrm{i}Q_{\mathrm I}$, where $Q_{\mathrm R},Q_{\mathrm I}\in\R^{Ns\times Ns}$ satisfy $Q_{\mathrm R}^\top=Q_{\mathrm R}$ and $Q_{\mathrm I}^\top=-Q_{\mathrm I}$. Then
\begin{equation}
Q\succeq0\quad\Longleftrightarrow\quad
\begin{pmatrix}Q_{\mathrm R}&-Q_{\mathrm I}\\ Q_{\mathrm I}&Q_{\mathrm R}\end{pmatrix}\succeq0
\quad\text{over }\R.
\label{eq:realification-PSD}
\end{equation}
\end{enumerate}
\end{lemma}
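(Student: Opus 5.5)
For part (i), the plan is to prove both implications by expanding entries in the basis $z$. For (a)$\Rightarrow$(b), each entry of $R_r$ lies in $\operatorname{span}_{\C}\{z_1,\ldots,z_N\}$. So each column $R_r e_\ell$ can be written as $(z^\top\otimes I_s)\,w_{r\ell}$ for some coefficient vector $w_{r\ell}\in\C^{Ns}$, chosen by collecting, for each output index $a\in[s]$ and each $n\in[N]$, the coefficient of $z_n$ in $(R_r)_{a\ell}$. Since $z$ has real entries, $((z^\top\otimes I_s)w)^\dagger=w^\dagger(z\otimes I_s)$. Hence
\[
R_rR_r^\dagger=\sum_\ell (z^\top\otimes I_s)\,w_{r\ell}w_{r\ell}^\dagger\,(z\otimes I_s),
\]
and summing over $r,\ell$ gives \eqref{eq:exact-Gram-z} with $Q=\sum_{r,\ell}w_{r\ell}w_{r\ell}^\dagger\succeq0$.

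For (b)$\Rightarrow$(a), take a spectral factorization $Q=\sum_\ell w_\ell w_\ell^\dagger$ with $w_\ell\in\C^{Ns}$; for example, $w_\ell=\sqrt{\mu_\ell}\,e_\ell$ over eigenpairs $(\mu_\ell,e_\ell)$ of $Q$. Set $R_\ell:=(z^\top\otimes I_s)w_\ell\in\C[x]^{s\times1}$. Its entries are complex combinations of the $z_n$, and the same identity as above gives $F=\sum_\ell R_\ell R_\ell^\dagger$. Both directions are exact ambient polynomial identities.

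The "in particular" clause follows by applying this equivalence modulo $\cI_{\cX}^{\C}$. With $z=z_L$, a matrix polynomial has $\bdeg\le L$ exactly when its entries lie in $\operatorname{span}_{\C}\{z_L\}$, because $z_L$ lists all monomials of block degree at most $L$. Therefore $F\in\Sigma_L^{\Herm,s}(\cX)$ if and only if $F$ is congruent to some exact sum $\sum_rR_rR_r^\dagger$ of the form (a). By the equivalence just proved, this holds if and only if $F$ is congruent to $(z_L^\top\otimes I_s)Q(z_L\otimes I_s)$ for some $Q\succeq0$, which is \eqref{eq:Gram-representation}. The only point to check here is that the congruence is carried along unchanged; it is not touched by the factorization step.

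For part (ii), the plan is to use the standard real embedding. For $w=a+\mathrm{i}b$, a direct expansion gives
\[
w^\dagger Qw=\begin{pmatrix}a\\ b\end{pmatrix}^{\!\top}\begin{pmatrix}Q_{\mathrm R}&-Q_{\mathrm I}\\ Q_{\mathrm I}&Q_{\mathrm R}\end{pmatrix}\begin{pmatrix}a\\ b\end{pmatrix},
\]
where the imaginary cross terms cancel because $Q_{\mathrm I}$ is antisymmetric. Since every real vector $(a,b)$ arises from some $w$, the two quadratic forms are nonnegative simultaneously, which proves \eqref{eq:realification-PSD}.

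I do not expect any real obstacle in this argument. The most delicate point is the bookkeeping of the Kronecker ordering in the coefficient vectors $w$, together with using the fact that $z$ is real so that the conjugation in $R^\dagger$ does not act on $z$.
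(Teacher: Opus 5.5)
Your proposal is correct and is essentially the paper's proof: the paper likewise writes each $R_r=(z^\top\otimes I_s)V_r$ and sets $Q=\sum_rV_rV_r^\dagger$. For the converse it factors $Q=VV^\dagger$ into a single rectangular factor where you use rank-one column factors; it also carries the congruence through unchanged in the $z=z_L$ case and proves (ii) by the same quadratic-form identity. One small correction: the imaginary part of $w^\dagger Qw$ vanishes because $Q_{\mathrm R}$ is symmetric (which cancels $a^\top Q_{\mathrm R}b-b^\top Q_{\mathrm R}a$) and $Q_{\mathrm I}$ is antisymmetric (which kills $a^\top Q_{\mathrm I}a$ and $b^\top Q_{\mathrm I}b$), not by the antisymmetry of $Q_{\mathrm I}$ alone.
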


\begin{proof}
(i) Suppose that (b) holds. Factor $Q=VV^\dagger$ with $V\in\C^{Ns\times m}$ and define $R:=(z^\top\otimes I_s)V$. Every entry of $R$ belongs to $\operatorname{span}_{\C}\{z_1,\ldots,z_N\}$, and
\[
RR^\dagger=(z^\top\otimes I_s)VV^\dagger(z\otimes I_s)=F.
\]
Thus (a) holds with a single rectangular factor.

Conversely, suppose that (a) holds. For each $r$, the assumption on the entries of $R_r$ gives a coefficient matrix $V_r\in\C^{Ns\times m_r}$ such that $R_r=(z^\top\otimes I_s)V_r$. Setting $Q:=\sum_rV_rV_r^\dagger$ gives $Q\in\Herm(\C^{Ns})$, $Q\succeq0$, and
\[
(z^\top\otimes I_s)Q(z\otimes I_s)=\sum_rR_rR_r^\dagger=F.
\]
This proves the equivalence.

Now take $z=z_L$. If $F\in\Sigma_L^{\Herm,s}(\cX)$, then \cref{def:SOS-cones} gives factors $R_r$ with $\bdeg(R_r)\le L$ such that $F\equiv\sum_rR_rR_r^\dagger\pmod{\cI_{\cX}^{\C}}$. Every entry of each factor lies in the complex span of the entries of $z_L$. Applying the equivalence above to $\sum_rR_rR_r^\dagger$ therefore gives a positive semidefinite Gram matrix satisfying \eqref{eq:Gram-representation}. Conversely, factoring a positive semidefinite Gram matrix in \eqref{eq:Gram-representation} produces matrix SOS factors of block degree at most $L$, proving $F\in\Sigma_L^{\Herm,s}(\cX)$.

(ii) Define the real symmetric matrix
\[
\mathfrak R(Q):=\begin{pmatrix}Q_{\mathrm R}&-Q_{\mathrm I}\\ Q_{\mathrm I}&Q_{\mathrm R}\end{pmatrix}.
\]
For every $a,b\in\R^{Ns}$,
\begin{equation}
\begin{pmatrix}a\\ b\end{pmatrix}^{\!\top}\mathfrak R(Q)\begin{pmatrix}a\\ b\end{pmatrix}
=(a+\mathrm{i}b)^\dagger Q(a+\mathrm{i}b).
\label{eq:realification-quadratic-form}
\end{equation}
Every vector in $\C^{Ns}$ has a unique representation $a+\mathrm{i}b$. Hence the right-hand side is nonnegative for every complex vector if and only if the left-hand side is nonnegative for every real vector, proving \eqref{eq:realification-PSD}.
\end{proof}

The existence quantifier in part~(i) is essential. The linear Gram map $Q\mapsto(z^\top\otimes I_s)Q(z\otimes I_s)$ need not be injective, so a matrix polynomial may admit both positive semidefinite and indefinite Gram representations. Membership in the matrix SOS cone therefore does not imply that an arbitrarily prescribed Gram matrix is positive semidefinite.

\section{Degree control in the product-sphere quotient}
\label{app:quotient-degree}

The following results justify the finite coefficient constraints in \cref{sec:finite-hierarchy} and the passage from pointwise polynomial identities on $\cX$ to congruences modulo the sphere ideals.

\begin{lemma}
\label{lem:quotient-degree}
Let $L\ge1$ and $g_j(x_j)=\norm{x_j}_2^2-1$ for $j\in[k]$. If $p\in\cI_{\cX}\subset\R[x]$ satisfies $\deg_{x_i}p\le2L$ for every $i\in[k]$, then $p=\sum_{j=1}^kg_jr_j$ for polynomials $r_j\in\R[x]$ satisfying
\begin{equation}
\deg_{x_j}r_j\le2L-2,\qquad \deg_{x_i}r_j\le2L\quad(i\ne j).
\label{eq:ideal-multiplier-degree}
\end{equation}
The same statement holds over $\C[x]$. In particular, if $H\in M_s(\cI_{\cX}^{\C})$ satisfies $\deg_{x_i}H\le2L$ for every $i\in[k]$, then there exist $S_j\in\C[x]^{s\times s}$ such that
\[
H=\sum_{j=1}^kg_jS_j,\qquad \deg_{x_j}S_j\le2L-2,\qquad \deg_{x_i}S_j\le2L\quad(i\ne j),
\]
with the degree bounds understood entrywise.
\end{lemma}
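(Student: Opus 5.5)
We prove the statement by a division algorithm that reduces one block at a time. The goal is to write $p=\sum_j g_jr_j+p_0$, where $p_0$ is a remainder that is "harmonic-reduced" in every block and has block degrees at most $2L$. Then $p_0\in\cI_{\cX}$ forces $p_0=0$.

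\textbf{Step 1 (single-block reduction).} Fix a block $x_j$. Expand $p$ as a polynomial in $x_j$ whose coefficients are polynomials in the other blocks. By the Fischer decomposition, every homogeneous polynomial $f$ of degree $m$ in $x_j$ can be written uniquely as $f=\sum_{t\ge0}\norm{x_j}_2^{2t}h_{m-2t}$, with $h_{m-2t}$ harmonic in $x_j$. Replacing $\norm{x_j}_2^{2t}$ by $1$ produces the multiple $(\norm{x_j}_2^{2t}-1)h_{m-2t}=g_j\cdot\bigl(\sum_{s<t}\norm{x_j}_2^{2s}\bigr)h_{m-2t}$. The multiplier has $x_j$-degree at most $m-2\le 2L-2$. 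Its degree in the other blocks is unchanged, since the Fischer decomposition acts coefficientwise in the remaining variables, and so it stays at most $2L$. The result is
\[
p=g_jr_j+p^{(j)},
\]
where $p^{(j)}$ is, in $x_j$, a sum of harmonic homogeneous pieces of degree at most $2L$. The other block degrees of $p^{(j)}$ do not increase.

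\textbf{Step 2 (iterate over blocks).} Apply Step 1 successively to $j=1,\dots,k$. Reducing block $j$ acts linearly on the coefficients of the other blocks. Because the harmonic projection in $x_j$ commutes with operations in $x_i$ for $i\neq j$, a block that is already reduced stays reduced: harmonicity in $x_i$ is preserved when we operate only on $x_j$-structure, since the Laplacians in different blocks commute with multiplication by $\norm{x_j}^2$. At the end we have $p=\sum_jg_jr_j+p_0$ with the degree bounds \eqref{eq:ideal-multiplier-degree}. The remainder $p_0$ lies in $\bigotimes_j\bigl(\bigoplus_{m\le 2L}\Pharm_m(\R^D)\bigr)$.

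\textbf{Step 3 (uniqueness).} Since $p$ and $\sum g_jr_j$ both vanish on $\cX$, so does $p_0$. Restricting to $\cX$ sends the tensor product of harmonic spaces injectively into functions on $\cX$. The block-homogeneous harmonic restrictions $\cH_{m_1}^D\otimes\cdots\otimes\cH_{m_k}^D$ are mutually orthogonal in $L^2(\mu_D^{\otimes k})$, and the restriction map is injective on each of them by the homogeneity argument given in the paper. Hence $p_0=0$.

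The complex and matrix versions follow by applying the real statement to real and imaginary parts, entrywise.

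The main obstacle is the bookkeeping in Step 2. One must verify that reducing one block neither raises the degrees of other blocks nor destroys the reduced form already achieved there. This is easiest to see by writing $p$ in a product basis: monomials in the other blocks tensored with polynomials in $x_j$, so that each reduction is a linear operation on a single tensor factor.
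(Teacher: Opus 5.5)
Your proof is correct, and it takes a different route from the paper.

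\textbf{The paper's route.} The argument there is purely algebraic. It takes a lex order in which $x_{j,D}$ is the largest variable of block $j$, so that $\operatorname{LM}(g_j)=x_{j,D}^2$. These leading monomials are pairwise coprime, so $\{g_1,\ldots,g_k\}$ is a Gr\"obner basis by Buchberger's product criterion. The paper then divides $p$ by this basis and checks inductively that no reduction step raises any block degree; each step replaces $x_{j,D}^2$ by $1-\sum_{\ell<D}x_{j,\ell}^2$. The remainder is zero directly because $p\in\cI_{\cX}$.

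\textbf{Your route.} You use the canonical harmonic normal form instead. First comes the blockwise Fischer reduction. It is an operator of the form $\mathrm{id}\otimes R_j$ on $\R[x_{\hat j}]\otimes\R[x_j]$, where $x_{\hat j}$ denotes the blocks other than $x_j$. This tensor-factor description is the clean justification that earlier blocks keep their degrees and harmonic form. Second comes a uniqueness step. The restrictions of the multi-degree pieces $\bigotimes_j\Pharm_{m_j}(\R^D)$ are mutually orthogonal, and restriction is injective on each multi-homogeneous piece, so the remainder must vanish.

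\textbf{Comparison.} Your argument uses only that $p$ vanishes on $\cX$, not that it lies in $\cI_{\cX}$. It therefore proves the vanishing-ideal statement of \cref{lem:real-radical} and the degree-controlled multiplier bound at the same time, essentially merging the two appendix lemmas. The paper's proof of \cref{lem:real-radical} already relies on the same Fischer and orthonormality ingredients. The paper's route is shorter and needs no analytic input. The price is a non-canonical monomial order and a separate appeal to \cref{lem:real-radical} whenever pointwise identities on $\cX$ must be turned into ideal membership.
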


\begin{proof}
Choose a lexicographic monomial order in which $x_{j,D}$ is the greatest variable within block $x_j$, for every $j\in[k]$. Then $\operatorname{LM}(g_j)=x_{j,D}^2$. These leading monomials are pairwise relatively prime, so Buchberger's product criterion implies that $G=\{g_1,\ldots,g_k\}$ is a Gr\"obner basis of $\cI_{\cX}$ \cite{CoxLittleOShea2015}.

The case $p=0$ follows by taking all multipliers to be zero. Otherwise, set $B_i:=\deg_{x_i}p\le2L$ and divide $p$ by $G$. Since $p\in\cI_{\cX}$, the remainder is zero. We show that every monomial appearing in the current dividend has degree at most $B_i$ in each block $x_i$ throughout the division.

This holds initially. Suppose that a term $t$ is reduced using $g_j$. The quotient term $m:=t/x_{j,D}^2$ satisfies $\deg_{x_j}m\le B_j-2$ and $\deg_{x_i}m\le B_i$ for $i\ne j$. After cancelling $t$, the newly introduced terms are scalar multiples of $m$ and $mx_{j,\ell}^2$ for $\ell<D$. Their degrees are at most $B_j$ in block $x_j$ and at most $B_i$ in every other block. The claim follows by induction.

Every term added to the quotient $r_j$ is obtained in this way. Thus $\deg_{x_j}r_j\le B_j-2\le2L-2$ and $\deg_{x_i}r_j\le B_i\le2L$ for $i\ne j$. If $B_j<2$, no reduction uses $g_j$ and $r_j=0$. Since the remainder is zero, these quotients satisfy $p=\sum_{j=1}^kg_jr_j$.

The division argument applies over $\C$ as well. Equivalently, because the generators are real, one may apply the real statement to the real and imaginary parts of a polynomial in $\cI_{\cX}^{\C}$ and combine the resulting representations. Applying the scalar statement entrywise proves the matrix version.
\end{proof}

\begin{lemma}[Vanishing ideal]
\label{lem:real-radical}
For $D\ge2$, the ideal $\cI_{\cX}$ defined in \eqref{eq:sphere-ideal} is the full vanishing ideal of $\cX=(\sphere^{D-1})^k$ in $\R[x]$. Consequently:
\begin{enumerate}[label=(\roman*)]
\item Real scalar polynomials that agree on $\cX$ are congruent modulo $\cI_{\cX}$.
\item Complex scalar polynomials that agree on $\cX$ are congruent modulo $\cI_{\cX}^{\C}$.
\item If $F,G\in\C[x]^{s\times s}$ agree on $\cX$, then $F-G\in M_s(\cI_{\cX}^{\C})$, or equivalently $F\equiv G\pmod{\cI_{\cX}^{\C}}$ in the sense of \eqref{eq:matrix-congruence}.
\end{enumerate}
\end{lemma}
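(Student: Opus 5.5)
The plan is to show that $\cI_{\cX}$ is a real radical ideal whose real zero set is exactly $\cX$. Then the Real Nullstellensatz, or a direct argument, gives that it is the full vanishing ideal. I would avoid the Real Nullstellensatz machinery and argue directly with the Gröbner basis from \cref{lem:quotient-degree}. With the lexicographic order used there, $G=\{g_1,\ldots,g_k\}$ is a Gröbner basis. So every $p\in\R[x]$ is congruent modulo $\cI_{\cX}$ to a unique normal form $r$ with $\deg_{x_{j,D}} r\le 1$ for every block $j$. It therefore suffices to show that such a normal form vanishing on $\cX$ is identically zero.

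\textbf{Step 1: one sphere.} Write a normal form in one block $v=(v',t)$, $t=v_D$, as $r=a(v')+t\,b(v')$. On the open set $\{\norm{v'}_2<1\}$, the two points $(v',\pm\sqrt{1-\norm{v'}_2^2})$ lie on $\sphere^{D-1}$. Vanishing of $r$ at both points gives
\[
a(v')+\sqrt{1-\norm{v'}_2^2}\,b(v')=0,\qquad a(v')-\sqrt{1-\norm{v'}_2^2}\,b(v')=0 .
\]
Hence $a=0$, and $b=0$ on the open unit ball, which is nonempty since $D\ge2$. A polynomial vanishing on a nonempty open set is zero, so $a=b=0$.

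\textbf{Step 2: induction over blocks.} Write the normal form as
\[
r=\sum_{\epsilon\in\{0,1\}^k}\prod_j x_{j,D}^{\epsilon_j}\,c_\epsilon(x'_1,\ldots,x'_k).
\]
Fix all blocks except the first at points of their spheres. The resulting polynomial in $x_1$ is still of normal form in block $1$ and vanishes on $\sphere^{D-1}$, so Step 1 shows that its two coefficients vanish. Iterating, or equivalently evaluating at sign choices $\pm\sqrt{1-\norm{x'_j}^2}$ independently in each block on the product of open balls, shows that each $c_\epsilon$ vanishes on a nonempty open subset of $\R^{k(D-1)}$. Hence each $c_\epsilon=0$.

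Consequences (i)–(iii) then follow directly. (i) is the statement itself applied to $p-q$. For (ii), apply (i) separately to the real and imaginary parts, using $\cI_{\cX}^{\C}=\cI_{\cX}+\mathrm{i}\cI_{\cX}$. For (iii), apply (ii) entrywise.

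The main obstacle is making the multiblock sign-choice argument clean. The cleanest formulation uses the $2^k$ sign patterns to isolate each $c_\epsilon$ by inverting a tensor product of the invertible $2\times2$ matrices $\begin{pmatrix}1&s\\1&-s\end{pmatrix}$, where $s=\sqrt{1-\norm{x'_j}^2}>0$ on the open ball.
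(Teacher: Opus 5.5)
Your proof is correct, but it takes a different route from the paper's.

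\textbf{The paper's route.} It uses harmonic analysis. By the Fischer decomposition, every class modulo $g_j$ has a representative in $\bigoplus_\ell\Pharm_\ell(\R^D)$. The paper then chooses bases of these harmonic spaces that are orthonormal on the sphere. Their blockwise products span $\R[x]/\cI_{\cX}$ and are orthonormal in $L^2(\mu_D^{\otimes k})$. Hence evaluation on $\cX$ is injective on the quotient.

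\textbf{Your route.} You take as spanning set the monomials of degree at most one in each $x_{j,D}$, obtained from the division algorithm of \cref{lem:quotient-degree}. You prove their restrictions are linearly independent by evaluating at the $2^k$ points $(x'_j,\pm\sqrt{1-\norm{x'_j}_2^2})$ over a product of open balls. You then invert the tensor product of the invertible $2\times 2$ matrices.

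\textbf{Comparison.} Both arguments have the same skeleton: a spanning set of the quotient whose restrictions to $\cX$ are linearly independent.

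Yours is entirely elementary. It needs neither the Fischer decomposition nor the injectivity and mutual orthogonality of harmonic restrictions, only the identity theorem for polynomials vanishing on a nonempty open set. It also needs only the existence of a reduced remainder, not its uniqueness. Your argument in turn shows that any reduced element of $\cI_{\cX}$ is zero, which reproves the Gr\"obner basis property without Buchberger's criterion.

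The paper's version instead identifies the coordinate ring of $\cX$ with the span of products of spherical harmonics. That matches the harmonic framework used for the squared kernels and the Funk--Hecke eigenvalues elsewhere in the paper.
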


\begin{proof}
For a single block, the Fischer decomposition gives
\[
\R[x_j]=\bigoplus_{\ell\ge0}\bigoplus_{r\ge0}\norm{x_j}_2^{2r}\Pharm_\ell(\R^D);
\]
see \cite{DunklXu2014}. Since $\norm{x_j}_2^2\equiv1\pmod{g_j}$, every class modulo $g_j$ has a representative in $\bigoplus_{\ell\ge0}\Pharm_\ell(\R^D)$. Restriction to the sphere is injective on each homogeneous harmonic space, and the restrictions of different degrees are mutually orthogonal.

Choose bases of these harmonic spaces whose restrictions are orthonormal on the sphere. Applying the decomposition blockwise shows that $\R[x]/\cI_{\cX}$ is spanned by products of the chosen basis polynomials, one from each block. Their restrictions are orthonormal in $L^2(\mu_D^{\otimes k})$ and therefore linearly independent. It follows that evaluation on $\cX$ is injective on $\R[x]/\cI_{\cX}$. Thus every real polynomial vanishing on $\cX$ belongs to $\cI_{\cX}$. The reverse inclusion follows because every generator vanishes on $\cX$, proving the vanishing-ideal statement and part~(i).

For part~(ii), write a complex polynomial vanishing on $\cX$ as $p=a+\mathrm{i}b$, with $a,b\in\R[x]$. Both $a$ and $b$ vanish on $\cX$, so part~(i) gives $a,b\in\cI_{\cX}$ and hence $p\in\cI_{\cX}^{\C}$. Applying this argument entrywise to $F-G$ proves part~(iii).
\end{proof}

For related Positivstellens\"atze on odd-dimensional spheres, see \cite{DAngeloPutinar2009}.

\section{Continuity and measurable selection}
\label{app:selection}

This appendix proves \cref{lem:measurable-selector}. We first record an observation that will be used for both selection arguments. Let $X$ be a metric space, let $K$ be a compact metric space, and let $\Gamma:X\rightrightarrows K$ have nonempty compact values and closed graph. Then $\Gamma$ is weakly Borel measurable.

Indeed, for every closed set $C\subseteq K$, the set $\{x\in X:\Gamma(x)\cap C\ne\varnothing\}$ is closed by compactness of $C$ and the closed-graph property. For an open set $U\subsetneq K$, define
\[
C_m:=\{y\in K:\operatorname{dist}(y,K\setminus U)\ge1/m\},\qquad m\ge1.
\]
Each $C_m$ is compact and $U=\bigcup_{m\ge1}C_m$. Hence
\[
\{x\in X:\Gamma(x)\cap U\ne\varnothing\}
=\bigcup_{m\ge1}\{x\in X:\Gamma(x)\cap C_m\ne\varnothing\}
\]
is Borel. The case $U=K$ follows from nonemptiness of the values of $\Gamma$.

\begin{proof}[Proof of \cref{lem:measurable-selector}]
The POVM set $\{(M_1,\ldots,M_k)\in\Herm(B)^k:M_i\succeq0,\ \sum_iM_i=I_B\}$ is nonempty and compact. After substituting $\omega_i=\sigma_i(x)$, the objective in \eqref{eq:discrimination-primal} is jointly continuous in $x$ and the POVM, because each $\sigma_i$ is the restriction of an ambient polynomial. Berge's maximum theorem \cite{AliprantisBorder2006}, applied to this constant feasible-set correspondence, therefore implies that $s$ is continuous.

For the measurable selector, define
\[
\mathsf K:=\{Y\in\Herm(B):0\preceq Y\preceq I_B\},\qquad
\mathsf F(x):=\{Y\in\mathsf K:Y\succeq\sigma_i(x)/k\ \text{for all }i\in[k]\}.
\]
The set $\mathsf K$ is a compact metric space. Each $\mathsf F(x)$ is closed in $\mathsf K$ and contains $I_B$, so it is nonempty and compact. The correspondence $\mathsf F$ has closed graph because the fields $\sigma_i$ are continuous and positive semidefinite inequalities are preserved under limits. The observation above therefore shows that $\mathsf F$ is weakly Borel measurable.

Applying the measurable maximum theorem \cite{AliprantisBorder2006} to the objective $(x,Y)\mapsto-\Tr Y$ yields a Borel measurable minimizer $Y_*(x)$ of $\Tr Y$ over $\mathsf F(x)$. By \eqref{eq:discrimination-dual} and the bounds on optimal dual solutions established immediately after that equation, restricting the dual feasible set to $\mathsf K$ does not change its optimal value. Consequently, $\Tr Y_*(x)=s(x)$, and membership in $\mathsf F(x)$ gives the remaining properties in \eqref{eq:Ystar-properties}.
\end{proof}

\medskip
\noindent\textbf{Alternative selection argument.} With $\mathsf F$ and $\mathsf K$ as above, define the optimal-solution correspondence
\[
\cY_*(x):=\{Y\in\mathsf F(x):\Tr Y=s(x)\}.
\]
Dual attainment in \eqref{eq:discrimination-dual}, together with the bounds on optimal dual solutions, implies that $\cY_*(x)$ is nonempty and compact for every $x\in\cX$. Its graph is closed because $s$ and the fields $\sigma_i$ are continuous. Since all its values lie in the fixed compact metric space $\mathsf K$, the preceding observation implies weak Borel measurability. The Kuratowski--Ryll-Nardzewski selection theorem \cite{KuratowskiRyllNardzewski1965} then gives a Borel measurable selector satisfying \eqref{eq:Ystar-properties}.

\section{The degree-two kernel constant}
\label{app:kernel-constant}

This appendix proves \cref{prop:kernel-constant} and gives the finite matrix representation of $\lambda_{D,L}$. Fix $D\ge2$ and $L\ge1$. Let $p_0,\ldots,p_L$ be real orthonormal polynomials in $L^2(\nu_D)$ with $\deg p_r=r$. For $D\ge3$, these are the Gegenbauer polynomials normalized in $L^2(\nu_D)$. For $D=2$, they are $p_0=1$ and $p_r=\sqrt2\,T_r$ for $r\ge1$.

Writing $q=\sum_{r=0}^Le_rp_r$ with $e=(e_0,\ldots,e_L)^\top\in\R^{L+1}$, the normalization \eqref{eq:kernel-normalization} becomes $\norm{e}_2=1$. Define the real symmetric matrix $T_{D,L}\in\R^{(L+1)\times(L+1)}$ by
\begin{equation}
%(T_L)_{rs}:=\int_{-1}^1p_r(t)p_s(t)g_{2,D}(t)\,d\nu_D(t),\qquad 0\le r,s\le L.
(T_{D,L})_{rs} :=\int_{-1}^1p_r(t)p_s(t)g_{2,D}(t)\,d\nu_D(t), \qquad 0\le r,s\le L.
\label{eq:Toeplitz-definition}
\end{equation}
Equation \eqref{eq:funk-hecke-eigenvalue} gives $\lambda_2(q)=e^\top T_{D,L}e$, and hence
\[
\lambda_{D,L}=\max_{\norm{e}_2=1}e^\top T_{D,L}e=\lmax(T_{D,L}).
\]
In particular, the maximum is attained. Since $g_{2,D}$ has degree two, orthogonality implies $(T_{D,L})_{rs}=0$ whenever $\abs{r-s}>2$, so $T_{D,L}$ is pentadiagonal.

\medskip
\noindent\textbf{Upper bound.} By \eqref{eq:g2}, $g_{2,D}(t)\le1$ on $[-1,1]$. Therefore every normalized $q$ satisfies $\lambda_2(q)\le\int_{-1}^1q(t)^2\,d\nu_D(t)=1$, proving $\lambda_{D,L}\le1$.

% \medskip
% \noindent\textbf{Lower bound.} Take $q(t)=\sqrt D\,t$, which has degree at most $L$. If $v\sim\mu_D$, then $t\sim\nu_D$ has the same distribution as $v_1$. Rotational invariance gives $\E[v_1^2]=1/D$, while the fourth-order moment tensor has the form
% \[
% \E[v_iv_jv_\ell v_m]
% =a\bigl(\delta_{ij}\delta_{\ell m}+\delta_{i\ell}\delta_{jm}+\delta_{im}\delta_{j\ell}\bigr).
% \]
% The identity $\E\norm{v}_2^4=1$ gives $a=1/(D(D+2))$. Consequently,
% \[
% \int_{-1}^1t^2\,d\nu_D(t)=\frac1D,\qquad
% \int_{-1}^1t^4\,d\nu_D(t)=\frac{3}{D(D+2)}.
% \]
% Thus $q$ is normalized, and
% \[
% \lambda_2(q)=\frac{D}{D-1}\left(D\int_{-1}^1t^4\,d\nu_D(t)-\int_{-1}^1t^2\,d\nu_D(t)\right)
% =\frac{D}{D-1}\left(\frac{3}{D+2}-\frac1D\right)=\frac{2}{D+2}.
% \]
% This proves $2/(D+2)\le\lambda_{D,L}\le1$. Since $t\mapsto t^{-1}-1$ is decreasing on $(0,1]$, it also gives $0\le\rho_{D,L}\le D/2$, completing \eqref{eq:lambda-basic}.

\medskip
\noindent\textbf{Lower bound.}
For each integer $j\ge0$, define
\[
m_{2j}:=\int_{-1}^1 t^{2j}\,d\nu_D(t)>0.
\]
The measure $\nu_D$ has density proportional to
$(1-t^2)^{(D-3)/2}$ on $(-1,1)$. Integrating the derivative of
$t^{2j+1}(1-t^2)^{(D-1)/2}$ over $[-1,1]$ gives
\[
(D+2j)m_{2j+2}=(2j+1)m_{2j},
\]
since the boundary terms vanish for $D\ge2$. Hence
\[
\frac{m_{2L+2}}{m_{2L}}=\frac{2L+1}{D+2L}.
\]
Take
\[
q(t):=\frac{t^L}{\sqrt{m_{2L}}},
\]
which is a normalized real polynomial of degree $L$. By
\eqref{eq:funk-hecke-eigenvalue} and \eqref{eq:g2},
\[
\lambda_2(q)
=\frac{1}{D-1}
 \left(D\frac{m_{2L+2}}{m_{2L}}-1\right)
=\frac{2L}{D+2L}.
\]
By maximality of $\lambda_{D,L}$ and the upper bound proved above,
\[
\frac{2L}{D+2L}\le\lambda_{D,L}\le1.
\]
Since $t\mapsto t^{-1}-1$ is decreasing on $(0,1]$, this yields
\[
0\le\rho_{D,L}\le\frac{D}{2L},
\]
establishing \eqref{eq:lambda-basic}.

\medskip
\noindent\textbf{Quadratic rate.} For $D\ge3$, the degree-two case of the squared-kernel distortion estimate in \cite{FangFawzi2021} provides absolute constants $c_0'\ge1$ and $c_1'>0$ such that, whenever $L\ge c_0'D$, there exists a normalized real polynomial $\widetilde q$ of degree at most $L$ satisfying
\[
\abs{\lambda_2(\widetilde q)^{-1}-1}\le c_1'\left(\frac DL\right)^2.
\]
Here the kernel eigenvalues agree with \eqref{eq:funk-hecke-eigenvalue}, since the normalized zonal polynomial is $g_{2,D}$ and the integration measure is $\nu_D$. Enlarge $c_0'$ if necessary so that $c_1'/(c_0')^2<1$. The displayed bound then implies $\lambda_2(\widetilde q)>0$: its reciprocal is defined, and a negative eigenvalue would give $\abs{\lambda_2(\widetilde q)^{-1}-1}>1$. By maximality, $\lambda_{D,L}\ge\lambda_2(\widetilde q)$, so
\[
\rho_{D,L}=\lambda_{D,L}^{-1}-1
\le\lambda_2(\widetilde q)^{-1}-1
\le c_1'\left(\frac DL\right)^2.
\]

For completeness, the same rate on the circle follows directly from the Chebyshev basis. The Jacobi-matrix characterization of orthogonal polynomial zeros \cite{FangFawzi2021} gives a normalized polynomial $q$ of degree at most $L$ such that
\[
\int_{-1}^1tq(t)^2\,d\nu_2(t)=\xi_L,\qquad
\xi_L:=\cos\!\left(\frac{\pi}{2(L+1)}\right),
\]
where $\xi_L$ is the largest zero of $T_{L+1}$. Since $q(t)^2\,d\nu_2(t)$ is a probability measure, Cauchy--Schwarz and $g_{2,2}(t)=2t^2-1$ give
\[
\lambda_{2,L}\ge\lambda_2(q)\ge2\xi_L^2-1
=\cos\!\left(\frac{\pi}{L+1}\right).
\]
For $L\ge2$, the cosine is at least $1/2$. Using $1-\cos\theta\le\theta^2/2$, we obtain
\[
\rho_{2,L}\le\frac{1-\cos(\pi/(L+1))}{\cos(\pi/(L+1))}
\le\frac{\pi^2}{(L+1)^2}
\le\frac{\pi^2}{4}\left(\frac2L\right)^2.
\]
%Thus \eqref{eq:FF-rate} holds for all $D\ge2$ with $c_0=c_0'$ and $c_1=\max\{c_1',\pi^2/4\}$.
Set
\[
c_1:=\max\left\{c_1',\frac{\pi^2}{4},\frac{c_0'}{2}\right\}.
\]
Since $c_0'\ge1$, the preceding estimates give
\[
\rho_{D,L}\le c_1\left(\frac DL\right)^2
\]
for every $D\ge2$ and $L\ge c_0'D$. For the remaining levels
$1\le L<c_0'D$, the monomial bound proved above gives
\[
\rho_{D,L}
\le\frac{D}{2L}
=\frac{L}{2D}\left(\frac DL\right)^2
\le\frac{c_0'}{2}\left(\frac DL\right)^2
\le c_1\left(\frac DL\right)^2.
\]
Thus \eqref{eq:FF-rate} holds for every $D\ge2$ and every $L\ge1$,
with an absolute constant $c_1$ independent of $D$ and $L$.
\end{document}